\begin{document}
\title{Pebble guided Treasure Hunt in Plane\thanks{The authors of the paper are thankful to Dr. Yoann Dieudonné for his valuable inputs and suggestions.}}
%
\author{Adri Bhattacharya\inst{1}\orcidID{0000-0003-1517-8779} \and
Barun Gorain\inst{2} \and
Partha Sarathi Mandal\inst{1}\orcidID{0000-0002-8632-5767}}
\authorrunning{Bhattacharya et al.}
%
\institute{Indian Institute of Technology Guwahati, India\\ \email{\{a.bhattacharya, psm\}@iitg.ac.in} \and
Indian Institute of Technology Bhilai, India\\
\email{barun@iitbhilai.ac.in}}
\maketitle

\begin{abstract}

We study the problem of treasure hunt in a Euclidean plane by a mobile agent with the guidance of pebbles.
The initial position of the agent and position of the treasure are modeled as special points in the Euclidean plane. The treasure is situated at a distance at most $D>0$ from the initial position of the agent. The agent has a perfect compass, but an adversary controls the speed of the agent. Hence, the agent can not measure how much distance it traveled for a given time.
The agent can find the treasure only when it reaches the exact position of the treasure. The cost of the treasure hunt is defined as the total distance traveled by the agent before it finds the treasure. The agent has no prior knowledge of the position of the treasure or the value of $D$. An Oracle, which knows the treasure's position and the agent's initial location, places some pebbles to guide the agent towards the treasure. Once decided to move along some specified angular direction, the agent can decide to change its direction only when it encounters a pebble or a special point. 

We ask the following central question in this paper:

``For given $k \ge 0$, What is cheapest treasure hunt algorithm if at most $k$ pebbles are placed by the Oracle?"

We show that for $k=1$, there does not exist any treasure hunt algorithm that finds the treasure with finite cost. We show the existence of an algorithm with cost $O(D)$ for $k=2$. 
For $k>8$  
 we have designed an algorithm that uses $k$ many pebbles to find the treasure  with cost $O(k^{2}) + D(\sin\theta' + \cos\theta')$, where $\theta'=\frac{\pi}{2^{k-8}}$. The second result shows the existence of an algorithm with cost arbitrarily close to $D$ for sufficiently large values of $D$.
\newline
{ \noindent{\bf Keywords} Treasure Hunt, Mobile agent, Pebbles, Euclidean plane, Deterministic algorithms.}

\end{abstract}

\section{Introduction}

Treasure Hunt problem is the task of finding an inert target by a mobile agent in an unknown environment. The unknown environment can be modeled as a network or a plane. Initially placed at a point in the unknown environment, a mobile agent has to find an inert target, called the treasure. The target or the treasure can be a  miner lost in a cave. The cave can be uninhabitable for humans to search for the lost miner, or it can be inundated with toxic waters henceforth, the person should be found as fast as possible. In computer science applications, a software agent must visit the computers connected by a local area network to find the computer affected by malware. 

In this paper, we study the problem of treasure hunt in the Euclidean plane under a very weak scenario which assumes very little knowledge and control power of the mobile agent. Specifically, the agent does not have any prior knowledge about the position of the treasure or its distance from the treasure. Moreover, the agent has no control over the speed of its movement, and it is assumed that an adversary completely controls the speed of the agent. In practice, for software agents in a network, the movement speed of the agent depends on various factors, such as congestion in the network. In the case of hardware mobile robots, their speeds depend on many mechanical characteristics as well as environmental factors. The agent is equipped with a perfect compass, which helps the agent to rotate and move in a prescribed direction. The agent is initially placed at a point $P$ in the plane. The treasure $T$ is located at most $D >0$ distance (unknown to the agent) from $P$. The agent finds the treasure only when it reaches the exact position of the treasure. The agent's initial position is considered a special point, and the agent can detect this point whenever it visits $P$.  

In the absence of control over its movement speed, once the agent decides to move along a particular angle, it is very important for the agent to learn when to stop its movement. Otherwise, the adversary can increase the speed arbitrarily high, and the agent ends up traversing an arbitrarily large distance. In order to enable the agent to have some control over its movement, an Oracle, knowing the position of the treasure, and the initial position of the agent, places some stationary pebbles on the plane. We assume a restriction on the pebble placement by the Oracle: any two pebbles must separated by a constant distance, i.e., no two pebbles are placed arbitrarily close \footnote{This is required if the sensing capability of the agent is weak, two pebbles placed very close to each other may not be distinguished by the agent. }.. For simplicity, we assume that  any two pebbles must be placed at least 1 distance apart. The agent can detect the existence of a pebble only when it reaches the position of the pebble where its placed by the Oracle.

These pebbles placement helps the agent control its movement and rule out the possibility of traversing arbitrarily large distances. Starting from some position of the plane, the agent, moving along a specific direction, stops or changes its direction once it encounters a pebble along the path of its movement. Thus, the movement algorithm of the agent gives instruction to move along a specific angle $\alpha$ until it encounters a special point (i.e., the initial position $P$ or the position of the treasure $T$) or it hits a pebble.  

Formally, for a given positive integer $k\ge 0$, the Oracle is a function $f_k:(E \times E) \rightarrow  E^k$, where $E$ is the set of of all the points in the Euclidean Plane. The function takes two points as the input, the first one is the initial position of the agent, and the second one is the position of the treasure, and gives $k$ points in the plane as output which represents the placement of a pebble at each of these $k$ points. 

The central question studied in this paper is: ``For given $ k \ge 0$, what is the minimum cost of treasure hunt if at most $k$ pebbles are placed in the plane?"

\subsection{Contribution}
Our contributions in this paper are summarized below.
\begin{itemize}

    \item For $k=1$ pebbles, we have shown that it is not possible to design a treasure hunt algorithm that finds the treasure with finite cost. 
    \item For $k=2$ pebbles, we propose an algorithm that finds the treasure with cost at most $4.5 D$, where $D$ is the distance between the initial position of the agent and the treasure.
    \item For $k>8$, we design an algorithm that finds the treasure using $k$ pebbles with cost $O(k^{2}) + D\left(\sin\theta' + \cos\theta'\right)$, where $\theta' = \frac{\pi}{2^{{k-8}}}$. For sufficiently large values of $D$ and $k \in o (\sqrt{D})$, the cost of this algorithm is arbitrarily close to $D$, the cost of the optimal solution in case the position of the treasure is known to the agent.

\end{itemize}

\subsection{Related Work}

The task of searching for an inert target by a mobile agent has been rigorously studied in the literature under various scenarios. The underlying environment or the topology may be either a discrete or continuous domain, i.e., a graph or a plane. The search strategy can be either deterministic or randomized. The book by Alpern et al. \cite{alpern2006theory} discusses the randomized algorithms based on searching for an inert target as well as the rendezvous problem, where the target and the agent are both dynamic, and they cooperate to meet. The papers by Miller et al.\cite{miller2015tradeoffs}, and Ta-Shma et al.\cite{ta2007deterministic} relate the correlation between rendezvous and treasure hunt problem in graph.

The problem of searching on a line for an inert target was first initiated by Beck et al. \cite{beck1970yet}. They gave an optimal competitive ratio of $9$. Further, Demaine et al. \cite{demaine2006online} modified the problem, in which there is a cost involved for each change in the direction of the searcher. In \cite{gal2010search}, the author surveys the searching problem in terms of search games where the target is either static or mobile. The search domain is either a graph, a bounded domain, or an unbounded set. Fricke et al. \cite{fricke2016distributed}, generalized the search problem in a plane with multiple searchers.

Now, the paradigm of \textit{algorithm with advice} has been introduced mainly in networks. These {\it advice} enhances the efficiency of the problems as discussed in \cite{abiteboul2006compact,bhattacharya2022treasure,bouchard2020deterministic,bouchard2022impact,dereniowski2012drawing,emek2011online,fraigniaud2009distributed,fraigniaud2008tree,fraigniaud2010communication,fraigniaud2007local,fusco2008trade,gorain2022pebble,pelc2019cost}. In this paradigm, the authors \cite{fraigniaud2010communication,fraigniaud2007local} mainly studied the minimum amount of advice required in order to solve the problem efficiently. In \cite{dobrev2012online,fraigniaud2009distributed}, the online version of the problems with advice was studied. The authors \cite{bouchard2020deterministic}, considered the treasure hunt problem, in which they gave an optimal cost algorithm where the agent gets a hint after each movement. Pelc et al. \cite{pelc2021advice}, gave an insight into the amount of information required to solve the treasure hunt in geometric terrain at $O(L)$- time, where $L$ is the shortest path of the treasure from the initial point. Bouchard et al. \cite{bouchard2022impact}, studied how different kinds of initial knowledge impact the cost of treasure hunt in a tree network.

The two papers closest to the present work are \cite{gorain2022pebble,pelc2019cost}. Pelc et al. \cite{pelc2019cost}, provided a trade-off between cost and information of solving the treasure hunt problem in the plane. They showed optimal and almost optimal results for different ranges of vision radius. Gorain et al. \cite{gorain2022pebble}, gave an optimal treasure hunt algorithm in graphs with pebbles, termed as advice. In \cite{bhattacharya2022treasure}, the authors studied a trade-off between the number of pebbles vs. the cost of the treasure hunt algorithm in an undirected port-labeled graph.

\noindent\textbf{Organization:} The paper is organized in the following manner. Section \ref{feasibility} gives a brief idea about the feasibility of the treasure hunt problem when a certain number of pebbles are placed. Section \ref{FasterTH} is subdivided into three subsections, in subsection \ref{highlevelidea}, the high-level idea of the algorithm is described, in subsection \ref{pebbleplacement}, the pebble placement strategy is described, and in subsection \ref{treasurehunt} the treasure hunt algorithm is discussed. In section \ref{correct}, correctness and complexity are discussed. Further, in section \ref{section-4}, possible future work and conclusion are explained.

\section{Feasibility of Treasure hunt}\label{feasibility}

In this section, we discuss the feasibility of the treasure hunt problem, when the oracle places one and two pebbles, respectively.

\begin{theorem}
It is not possible to design a treasure hunt algorithm using at most one pebble that finds the treasure at a finite cost.
\end{theorem}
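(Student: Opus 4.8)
The plan is to argue by contradiction: suppose some agent algorithm together with an Oracle placement $f_1$ finds the treasure with finite cost for every treasure location $T$ within distance $D$ of $P$. I would first exploit the two structural constraints forced by the model — the adversarial speed (so the agent can never turn except at a pebble or a special point) and the absence of any initial information — to pin down the shape of every possible trajectory. Since at the start the agent sits at the special point $P$ and has observed nothing, its first heading $\alpha_0$ is a fixed constant of the algorithm. Moving along $\alpha_0$, the agent can only ever deviate if it meets the single pebble; hence if the Oracle places the pebble off the ray $R_0$ emanating from $P$ in direction $\alpha_0$, the agent runs to infinity and only treasures lying on $R_0$ are found. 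Consequently, to have any hope of reaching an off-ray treasure, the Oracle must place the pebble at some point $Q_r = P + r\,u(\alpha_0)$ on $R_0$, where $u(\alpha)$ denotes the unit vector in direction $\alpha$, leaving a single real parameter $r>0$ under its control.

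The heart of the argument is to show that this one parameter buys almost nothing. Between consecutive turns the agent travels a straight segment whose endpoints must be special points or the pebble, i.e.\ must lie in $\{P,Q_r,T\}$. As long as the agent has not yet reached $T$, the only two endpoints available are $P$ and $Q_r$, and the only straight moves that return the agent to a known point are those along $R_0$ itself: from $P$ toward $Q_r$ (direction $\alpha_0$) and from $Q_r$ back toward $P$ (direction $\alpha_0+\pi$). Every other heading sends the agent to infinity, a \emph{terminal} move, unless it happens to pass through $T$. Crucially, because the agent cannot measure distance, its entire sequence of decisions at $P$ and at $Q_r$ is identical for all values of $r$: the combinatorial history of events it perceives (pebble, $P$, pebble, $\dots$) is the same for every $r$, so the deterministic algorithm issues the same headings. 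In particular the single terminal heading it eventually commits to — say a fixed direction $\beta$ from $Q_r$, or a fixed direction from $P$ — is a constant of the algorithm, independent of $r$.

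Putting this together, for a fixed algorithm the set of treasure positions locatable with finite cost over all admissible placements $Q_r$ is contained in $R_0$ together with the union over $r$ of the single terminal ray from $Q_r$ in the fixed direction $\beta$. That union is exactly the planar cone with apex $P$ spanned by the directions $\alpha_0$ and $\beta$, an angular wedge of opening strictly less than $\pi$ (it cannot equal $\pi$, since $\beta=\alpha_0+\pi$ is the non-terminal return move). Hence the reachable set misses an open wedge of the plane at every radius. Choosing $T$ inside this missed wedge at distance at most $D$ from $P$ yields a treasure that no placement of a single pebble allows the agent to find at finite cost, contradicting the assumption.

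The step I expect to be the main obstacle is the combinatorial lemma of the second paragraph: rigorously ruling out that repeated oscillations between $P$ and $Q_r$, or a cleverly chosen return heading, could manufacture more than one terminal direction and thereby sweep out a wider region. This requires arguing carefully that (i) the only headings keeping the agent on a bounded orbit are $\alpha_0$ and $\alpha_0+\pi$, so any excursion off $R_0$ is irrevocable and can be exploited at most once per trajectory, and (ii) the adversary's control of speed makes every decision distance-blind, forcing the terminal heading to be a single algorithm constant rather than a family indexed by $r$. Once this invariant is established, the wedge argument and the choice of an uncovered $T$ are routine.
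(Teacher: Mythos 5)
Your proof is correct and follows essentially the same route as the paper's: a deterministic, distance-blind agent has a fixed initial heading, the pebble must lie on that ray, the only non-terminal moves are oscillations between $P$ and the pebble, and hence the adversary can place $T$ off every trajectory the agent could ever follow. In fact your write-up is tighter than the paper's at the one delicate point: by parametrizing the pebble position by $r$ and showing that the union over all $r$ of the possible terminal rays sweeps out only a wedge of opening strictly less than $\pi$, you correctly handle the fact that the Oracle may adapt its placement to $T$ --- a quantifier issue that the paper's three-case argument passes over silently.
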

\begin{proof}
The agent initially placed at $P$ and the pebble is placed somewhere in the plane by the oracle. Since the agent has no prior information about the location of the treasure, the treasure can be positioned anywhere in the plane by the adversary. The only possible initial instruction for the agent is to move along a certain angle from $P$. The agent along its movement, must encounter a pebble otherwise, it will continue to move in this direction for an infinite distance, as it has no sense of distance. After encountering the pebble, there are three possibilities: either it may return back to $P$ and move at a certain angle from $P$ or it may return back to $P$ and move along the same path traversed by the agent previously to reach the pebble or it may move at a certain angle from the pebble itself. The adversary may place the treasure at a location different from the possible path to be traversed by the agent. Hence, it is not possible to find the treasure at a finite cost.

\qed
\end{proof}

In this part, we discuss the strategy of pebble placement and respective traversal of the agent toward the treasure when two pebbles are placed by the oracle.

\noindent\textbf{Pebble Placement:}  Based on the location of the treasure, two pebbles are placed as follows. Let the coordinates of the treasure $T$ be $(x_T,y_T)$. If either of $x_T$ or $y_T$ is positive, place one pebble at $(z+1,z+1)$, where $z=\max\{|x_T|,|y_T|\}$. Place another pebble at $(x_T,z+1)$. Otherwise, if both $x_T$ and $y_T$ are negative, place one pebble at $(1,1)$ and another pebble at $(x_T,1)$.

\noindent\textbf{Treasure Hunt by the agent:} The agent initially at $P$, moves at an angle $\frac{\pi}{4}$ with the positive $x$ axis until it encounters treasure or a pebble (i.e., $p_1$). If a pebble is encountered, then from this position the agent moves along an angle $\pi-\frac{\pi}{4}$ until it encounters the treasure or reaches a pebble (i.e., $p_2$). If a pebble is encountered (i.e., from $p_2$), the agent further moves along an angle $\frac{\pi}{2}$ until it reaches the treasure $T$.
 
\begin{theorem}
The agent finds the treasure with cost $O(D)$ using the above algorithm.
\end{theorem}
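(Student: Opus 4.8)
The plan is to place $P$ at the origin, write $T=(x_T,y_T)$ with $x_T^2+y_T^2\le D^2$, and set $z=\max\{|x_T|,|y_T|\}$, noting at once that $z\le\sqrt{x_T^2+y_T^2}\le D$. Before anything else I would pin down the geometry of the three moves so that they are consistent with the pebble coordinates: the first move runs along the ray $y=x$ (``northeast''), the second is a horizontal move to the left (due west), and the third is a vertical move downward (due south). This is the reading of ``turn by $\pi-\tfrac{\pi}{4}$, then by $\tfrac{\pi}{2}$'' as successive turning angles, and it is forced by the placement, since a naive absolute reading of $\tfrac{3\pi}{4}$ from $p_1$ would send the agent away from $p_2$. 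With the geometry fixed, the proof splits into the two placement cases and, for each, establishes (i) that the agent actually reaches $T$, and (ii) that the traversed length is $O(D)$.

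For correctness I would argue leg by leg that the agent halts at exactly the intended point, using the fact that, although the adversary controls speed, the agent still moves along fixed straight segments and can only stop at a special point or a pebble. Along leg~$1$ the ray $y=x$ passes through $p_1=(z+1,z+1)$ but not through $p_2=(x_T,z+1)$ (because $x_T\le z<z+1$), so the first pebble encountered is $p_1$ --- unless $T$ itself lies on the ray ($x_T=y_T$), in which case $T$ is nearer than $p_1$ and is simply found. From $p_1$ the westward leg lives on the line $y=z+1$, which contains $p_2$ but not $T$ (since $y_T\le z<z+1$), so the agent stops at $p_2$; from $p_2$ the downward leg lives on $x=x_T$, which contains $T$ at height $y_T<z+1$ below $p_2$, so the agent reaches $T$. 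The ``both negative'' case is identical with $p_1=(1,1)$ and $p_2=(x_T,1)$. The one place that needs genuine care, and which I expect to be the main obstacle, is precisely this ordering / ``first-hit'' argument together with the degenerate configurations: $x_T=y_T$, a coordinate equal to $0$, or the downward leg passing through $P$ when $x_T=0$. In each of these I must verify that no wrong point is encountered before the intended one; everything else in the model is immaterial.

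For the cost I would simply add the three segment lengths. In the ``at least one positive'' case they are $(z+1)\sqrt{2}$, then $(z+1)-x_T$, then $(z+1)-y_T$, giving a total of $(z+1)(2+\sqrt{2})-(x_T+y_T)$. Using $|x_T|,|y_T|\le z\le D$ gives $-(x_T+y_T)\le 2D$ and hence a total of at most $(D+1)(2+\sqrt{2})+2D=O(D)$, which already settles the theorem as stated. To recover the sharper constant $4.5D$ advertised in the contributions I would maximize $(z+1)(2+\sqrt{2})-(x_T+y_T)$ over the disk subject to at least one coordinate being nonnegative; the extremizer is $T\approx(-D,0)$, where the total tends to $(3+\sqrt{2})D\approx 4.41D$, safely below $4.5D$ once the additive constants are absorbed for large $D$. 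In the ``both negative'' case the legs are $\sqrt{2}$, $1+|x_T|$, and $1+|y_T|$, summing to $2+\sqrt{2}+|x_T|+|y_T|\le \sqrt{2}\,D+O(1)$, which is strictly smaller. Taking the maximum over both cases yields the $O(D)$ bound and completes the proof.
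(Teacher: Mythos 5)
Your proposal is correct, and it reaches the bound by a noticeably cleaner route than the paper. Both proofs share the same skeleton --- the cost is the sum of the three legs $|Pp_1|+|p_1p_2|+|p_2T|$ --- but the paper executes this as a quadrant-by-quadrant trigonometric case analysis (four cases, each with sub-cases according to which coordinate of $T$ dominates, parametrized by the angle $\theta$ of $PT$), arriving at per-case expressions such as $\sqrt{2}(D\cos\theta+1)+1+(D\cos\theta+1-D\sin\theta)$ and a cumulative bound of roughly $4.5D+(\sqrt{2}+2)$. You instead collapse all of the ``some coordinate nonnegative'' cases into the single closed form $(z+1)(2+\sqrt{2})-(x_T+y_T)$ with $z=\max\{|x_T|,|y_T|\}\le D$, which yields $O(D)$ in one line and even a slightly sharper worst-case constant $(3+\sqrt{2})D\approx 4.41D$ attained near $T=(-D,0)$. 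Beyond brevity, your proof supplies three things the paper's argument simply omits: (i) the disambiguation of the movement angles as successive turns (the paper's ``move at angle $\pi-\frac{\pi}{4}$'' read as an absolute bearing would miss $p_2$ entirely, so your reading is the only consistent one); (ii) a first-hit argument showing the agent cannot stop at the wrong pebble or overshoot --- the paper's proof is pure cost accounting that presupposes the path $P\rightarrow p_1\rightarrow p_2\rightarrow T$ is actually realized; and (iii) the boundary configurations ($x_T=y_T$, a zero coordinate, the final leg passing through $P$), which fall through the cracks of the paper's strict ``either positive / both negative'' dichotomy. What the paper's parametrization buys in exchange is explicit per-quadrant cost functions of $\theta$, but its $\max/\min$ aggregation of the $f_i$ is muddled, so your unified bound is both more rigorous and easier to verify.
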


\begin{proof}

According to the proposed algorithm, the cost of finding the treasure is the path $Pp_1+p_1p_2+p_2T$ (refer Fig. \ref{two-12}), where $p_1$ and $p_2$ are the positions of the first and second pebbles, respectively. Let $f_i : \theta \longrightarrow \mathbb{R}$, where $i= 1, \cdots, 5$, be the set of cost functions for each of the following cases, we analyze them as follows:

\begin{itemize}
    \item[1:] If the treasure is on the first quadrant, then let $A$ and $B$ be the foot of the perpendicular drawn from $T$ and $p_1$, respectively. Let $\angle TPA =\theta$ (refer Fig. \ref{two}). So, $PA=D\cos\theta$ and $AT=D\sin\theta$. Now we have the following cases:

\begin{figure}
\centering
\begin{subfigure}{.5\textwidth}
  \centering
  \includegraphics[width=.9\linewidth]{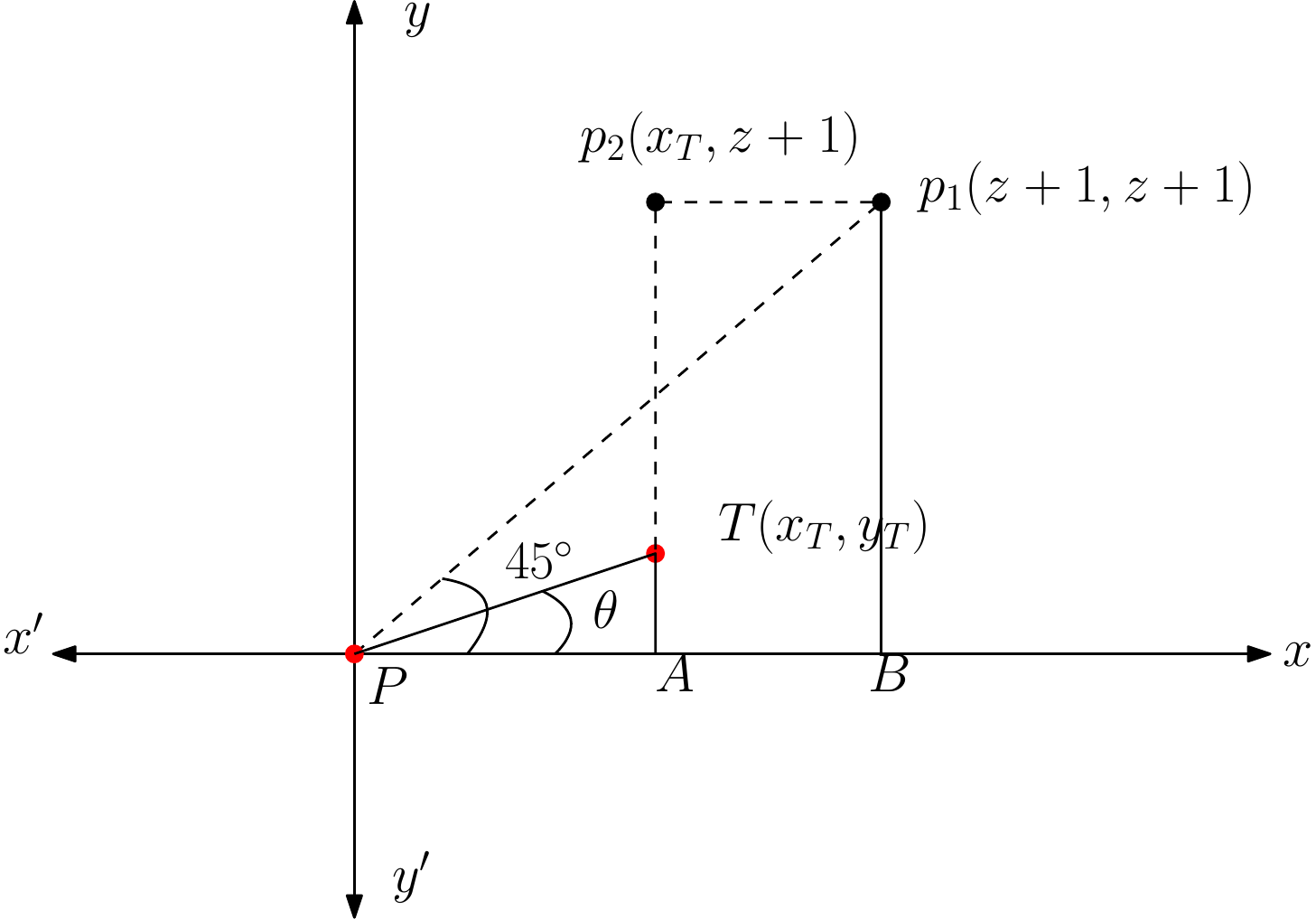}
  \caption{Treasure in $1$st Quadrant}
  \label{two}
\end{subfigure}%
\begin{subfigure}{.5\textwidth}
  \centering
  \includegraphics[width=.9\linewidth]{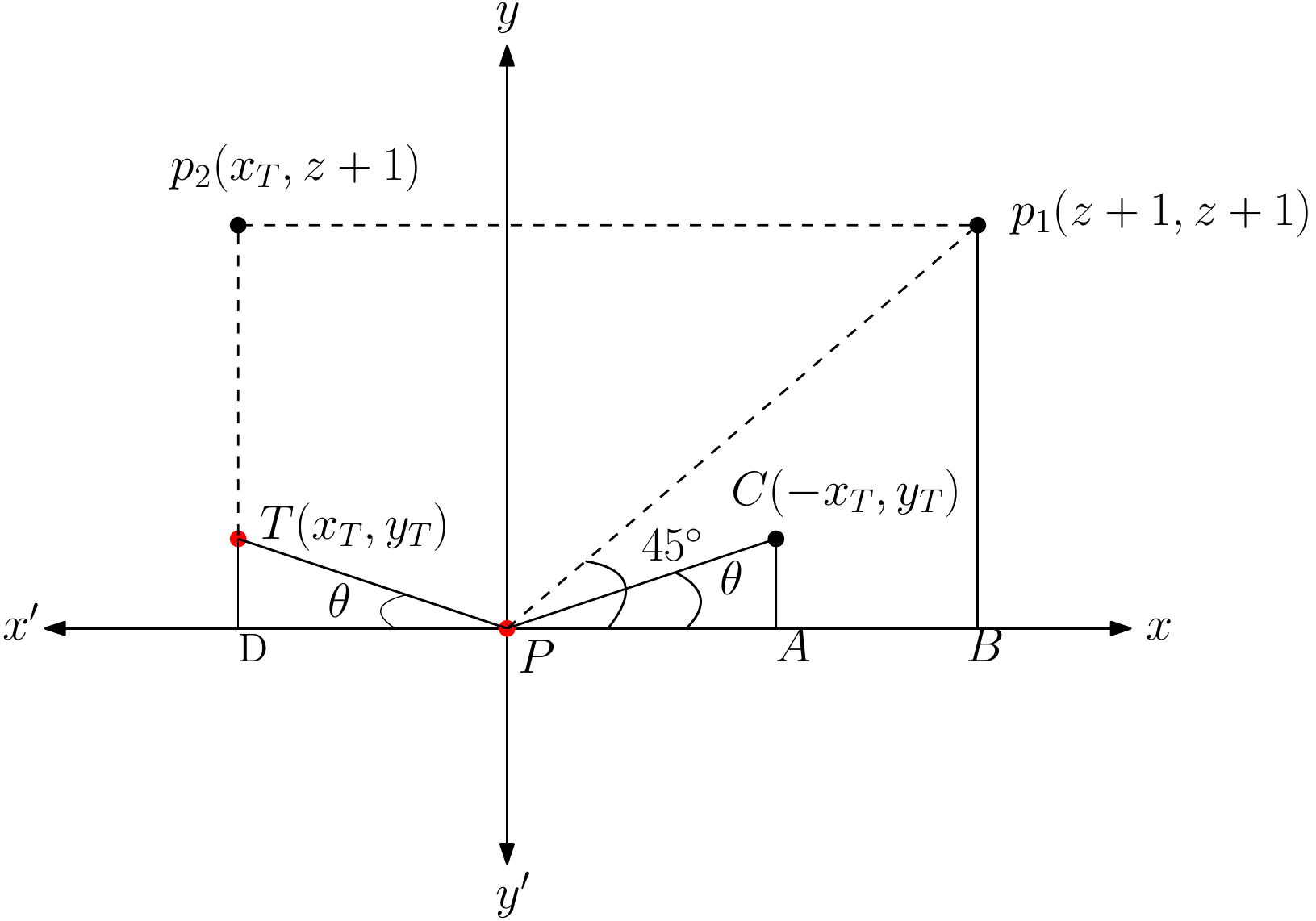}
  \caption{Treasure in $2$nd Quadrant}
  \label{two-2nd}
\end{subfigure}
\caption{Movement of the agent when the treasure is located in the upper half of the plane}
\label{two-12}
\end{figure}

    \begin{itemize}
        \item [1(a):] When $x_T\geq y_T$, then the pebbles $p_1$ and $p_2$ are placed at $(x_T+1,x_T+1)$ and $(x_T,x_T+1)$, respectively. So, $PB=D\cos\theta+1$ and $PB=Bp_1$ (since $\Delta p_1PB$ is an isosceles triangle), this implies $Pp_1= \sqrt{2}(D\cos\theta+1)$. Moreover in this case, $p_1p_2=1$ and $p_2T=p_2A-TA=D\cos\theta+1-D\sin\theta$. So, the total cost is: $\sqrt{2}(D\cos\theta+1)+1+(D\cos\theta+1-D\sin\theta)$ which is linear in terms of $D$. 
        \item [1(b):] When $y_T> x_T$, then the pebbles $p_1$ and $p_2$ are placed at $(y_T+1,y_T+1)$ and $(x_T,y_T+1)$, respectively. So, $Bp_1=D\sin\theta+1$ and $PB=Bp_1$ (since $\Delta p_1PB$ is an isosceles triangle), this implies $Pp_1= \sqrt{2}(D\sin\theta+1)$. Moreover in this case $p_1p_2=(D\sin\theta+1)-D\cos\theta$ and $p_2T=p_2A-TA=D\sin\theta+1-D\sin\theta=1$. So, the total cost is: $\sqrt{2}(D\sin\theta+1)+(D\sin\theta+1)-D\cos\theta+1$ which is again linear in terms of $D$. 
                
    \end{itemize}
    So, $f_1(\theta) = \max\{\min_{\forall \theta \in [0,2\pi]}\{\sqrt{2}(D\cos\theta+1)+1+(D\cos\theta+1-D\sin\theta)$, $\sqrt{2}(D\sin\theta+1)+(D\sin\theta+1)-D\cos\theta+1\}\}$
    \item[2:] If the treasure is on the second quadrant, let C be the mirror image of $T$ on first quadrant (refer Fig. \ref{two-2nd}) then consider $A$ and $B$ be the foot of the perpendicular drawn from $C$ and $p_1$, respectively. Let $\angle TPD =\theta$, and hence $\angle CPA=\theta$. So, we have $PA=D\cos\theta$ and $AC=D\sin\theta$. We have the following cases:
    
    \begin{itemize}
        \item [2(a):] When $|x_T|\geq y_T$, then the pebbles $p_1$ and $p_2$ are placed at $(-x_T+1,-x_T+1)$ and $(x_T,-x_T+1)$, respectively. So, $PB=D\cos\theta+1$ and $PB=Bp_1$ (since $\Delta p_1PB$ is an isosceles triangle), this implies $Pp_1= \sqrt{2}(D\cos\theta+1)$. Moreover in this case, $p_1p_2=D\cos\theta+1+D\cos\theta$ and $p_2T=p_2A-TA=D\cos\theta+1-D\sin\theta$. So, the total cost is: $\sqrt{2}(D\cos\theta+1)+(D\cos\theta+1-D\sin\theta)+(2D\cos\theta+1)$ which is linear in terms of $D$.
        \item [2(b):] When $y_T> |x_T|$, then the pebbles $p_1$ and $p_2$ are placed at $(y_T+1,y_T+1)$ and $(x_T,y_T+1)$, respectively. So, $Bp_1=D\sin\theta+1$ and $PB=Bp_1$ (since $\Delta p_1PB$ is an isosceles triangle), this implies $Pp_1= \sqrt{2}(D\sin\theta+1)$. We have $p_1p_2=(D\sin\theta+1)+D\cos\theta$ and $p_2T=p_2A-TA=D\sin\theta+1-D\sin\theta=1$. So, the total cost is $\sqrt{2}(D\sin\theta+1)+(D\sin\theta+1)+D\cos\theta+1$, which is again linear in terms of $D$.
        
    \end{itemize}
    So, $f_2(\theta) = \max\{\min_{\forall \theta \in [0,2\pi]}\{\sqrt{2}(D\cos\theta+1)+(D\cos\theta+1-D\sin\theta)+(2D\cos\theta+1), \sqrt{2}(D\sin\theta+1)+(D\sin\theta+1)+D\cos\theta+1\}\}$.
    \item[3:] If the treasure is on the third quadrant, let $A$ and $B$ be the foot of the perpendicular drawn from $p_2$ and $p_1$. Let $\angle TPA =\theta$ (refer Fig. \ref{two-3rd}) and the pebbles $p_1$ and $p_2$ are placed at $(1,1)$ and $(x_T,1)$, respectively. So, $Pp_1=\sqrt{2}$, $p_1p_2=1+D\cos\theta$ and $p_2T=p_2A+AT=1+D\sin\theta$. So, the total cost is: $\sqrt{2}+1+D\cos\theta+D\sin\theta$, which is again linear in terms of $D$. Hence, $f_3(\theta) = \min_{\forall \theta \in [0,2\pi]} \{\sqrt{2}+1+D\cos\theta+D\sin\theta\}$.

\begin{figure}
\centering
\begin{subfigure}{.5\textwidth}
  \centering
  \includegraphics[width=.9\linewidth]{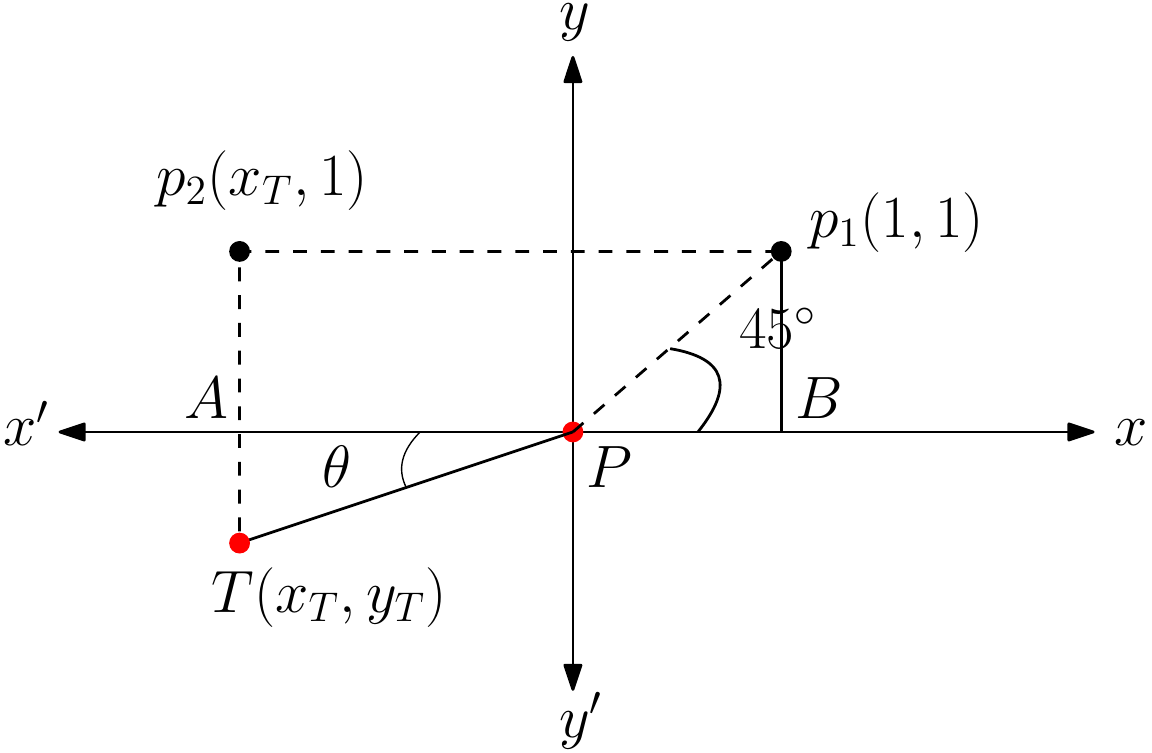}
  \caption{Treasure in $3$rd Quadrant}
  \label{two-3rd}
\end{subfigure}%
\begin{subfigure}{.5\textwidth}
  \centering
  \includegraphics[width=.9\linewidth]{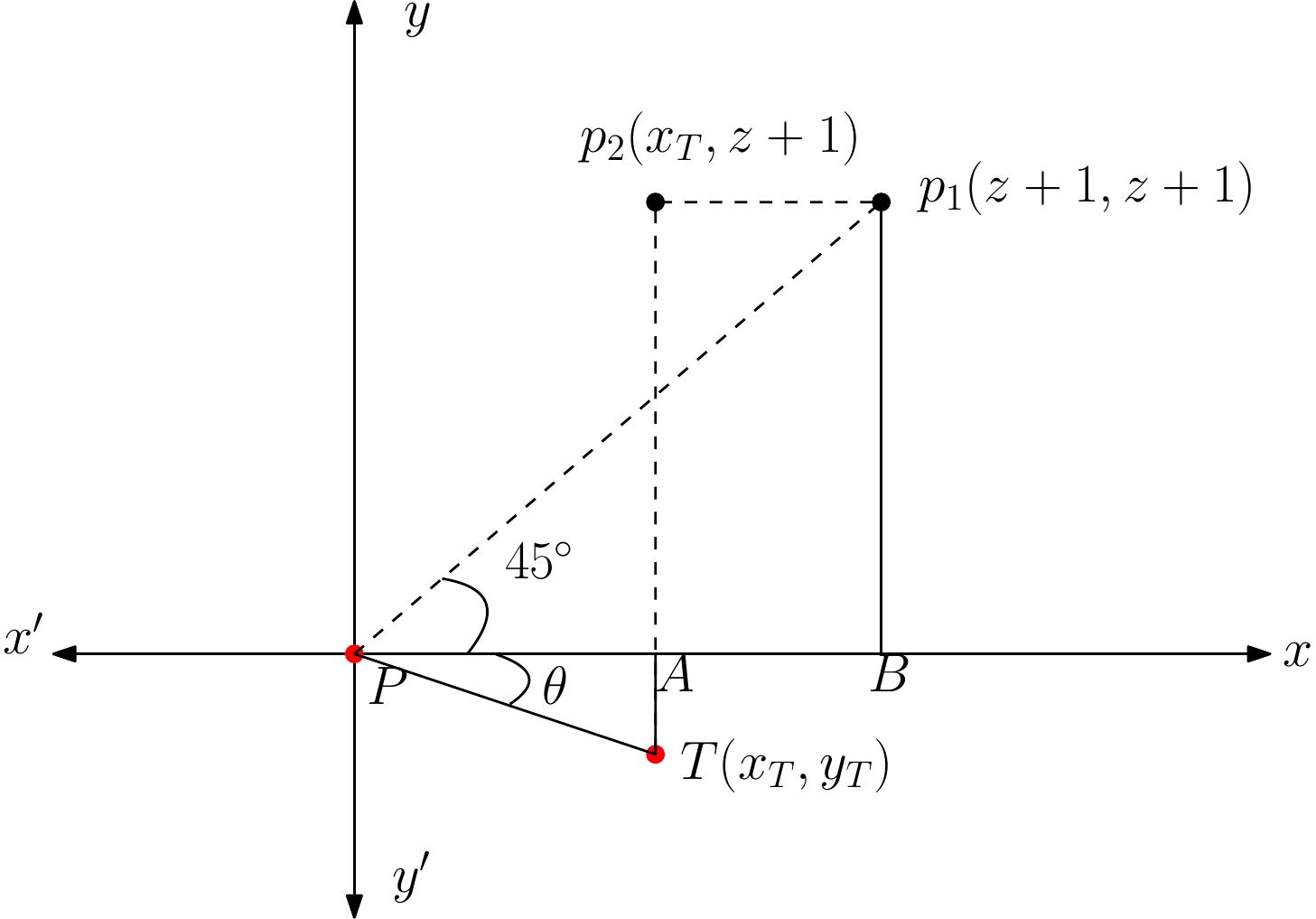}
  \caption{Treasure in $4$th Quadrant}
  \label{two-4th}
\end{subfigure}
\caption{Movement of the agent when the treasure is located at the lower half of the plane}
\label{two-34}
\end{figure}
      
\item[4:] If the treasure is on the fourth quadrant, then let $A$ and $B$ be the foot of the perpendicular drawn from $T$ and $p_1$, respectively. Let $\angle TPA =\theta$ (refer Fig. \ref{two-4th}). So, $PA=D\cos\theta$ and $AT=D\sin\theta$. Now we have the following cases:
\begin{itemize}
        \item [4(a):] When $x_T\geq |y_T|$, then the pebbles $p_1$ and $p_2$ are placed at $(x_T+1,x_T+1)$ and $(x_T,x_T+1)$, respectively. So, $PB=D\cos\theta+1$ and $PB=Bp_1$ (since $\Delta p_1PB$ is an isosceles triangle), this implies $Pp_1= \sqrt{2}(D\cos\theta+1)$. So, we have $p_1p_2=1$ and $p_2T=p_2A+TA=D\cos\theta+1+D\sin\theta$. So, the total cost is: $\sqrt{2}(D\cos\theta+1)+1+(D\cos\theta+1+D\sin\theta)$, which is linear in terms of $D$.
        \item [4(b):] When $|y_T|> x_T$, then the pebbles $p_1$ and $p_2$ are placed at $(-y_T+1,-y_T+1)$ and $(x_T,-y_T+1)$, respectively. So, $Bp_1=D\sin\theta+1$ and $PB=Bp_1$ (since $\Delta p_1PB$ is an isosceles triangle), this implies $Pp_1= \sqrt{2}(D\sin\theta+1)$. Hence, we have $p_1p_2=(D\sin\theta+1)-D\cos\theta$ and $p_2T=p_2A+TA=D\sin\theta+1+D\sin\theta=2D\sin\theta+1$. So, the total cost is: $\sqrt{2}(D\sin\theta+1)+(D\sin\theta+1)-D\cos\theta+2D\sin\theta+1$, which is again linear in terms of $D$.
    \end{itemize}
    So, we have $f_4(\theta) = \max\{\min_{\forall \theta \in [0,2\pi]} \{ \sqrt{2}(D\cos\theta+1)+1+(D\cos\theta+1+D\sin\theta), \sqrt{2}(D\sin\theta+1)+(D\sin\theta+1)-D\cos\theta+2D\sin\theta+1\}\}$
\end{itemize}

Further, the cumulative cost is $f_5(\theta) = \max_{\forall i \in \{1,\cdots,5\}} \{f_i(\theta)\}$, which is approximately $4.5D+(\sqrt{2}+2)$.
Hence, from all the above cases, we conclude that the cost complexity is linear in $D$.

\qed
\end{proof}

\section{Improved solution for treasure hunt}\label{FasterTH}

In this section, we propose faster algorithm which requires at least $9$ pebbles to perform the treasure hunt.

\subsection{High level idea}\label{highlevelidea}

Before we give the details of the pebble placement algorithm, we describe the high-level idea of the same. Intuitively, depending on the number of pebbles available, the Oracle divides the plane into multiple sectors as described in Section \ref{pebbleplacement}. Then it identifies the sector number $m$ in which the treasure is located and `encode' this number by placing the pebbles. The agent, looking at the pebble placements, `decode' this encoding, and move along the particular sector to find the treasure. There are certain challenges that need to be overcome to implement this idea.

\noindent{\bf Sector Detection:} The first difficulty is how different placements of pebbles enable the agent to differentiate between the bit 0 and the bit 1. Since the agent has no sense of time and distance, two different pebble placements may look identical to the agent. On the other hand, since the agent has no prior information about the encoded integer, its movement should also be planned in a way that using the same movement strategy will detect the bit zero for some instances and the bit 1 for other instances. The capability of detecting the initial point $P$ as special point is used to overcome this difficulty. 

First, we place a pebble $p_1$ at the point (1,0) and two additional fixed pebbles $p_2$ at (1,1) and $p_3$ at (2,1) are placed. The rest of the pebbles are placed based on the fact whether a particular bit of the encoding is a 0 or 1. Initially, consider the specific scenario of encoding only one bit 0 or 1. The idea is to place a particular pebble $p$ in two possible positions on the x-axis such that the agent, starting from $P$, reach $p$, then moving at a certain fix angle $\alpha$ from $p$ will reach $p_2$ for one position and $p_3$ for the other. The agent can not distinguish between $p_2$ and $p_3$ but moving in a particular angle $\beta$ from $p_2$ will reach $P$ and from $p_3$ will reach $p_1$. These two different scenarios are distinguished as 0 and 1, respectively. In order to achieve and implement the idea, the pebble $p$ is placed at the point (3,0) in case of encoding 1 and (4,0) in case of encoding 0. The advantage of this specific placement is that in case of placing $p$ at (3,0) that moving from $P$ to $p$, and then moving at an angle $\arctan{(\frac{-1}{2})}$, the agent reaches $p_2$ and then moving at an angle $\arctan{(3)}$, it reaches $P$. On the other hand, in the case of placing $p$ at (4,0), using the same movement, the agent arrives at $p_1$. Hence, it detects these two different observations as two different bits 1 and 0, respectively. (See  Fig. \ref{1 bit}).          

\begin{figure}
\centering
\begin{minipage}{.45\textwidth}
  \centering
  \includegraphics[width=1.0\linewidth]{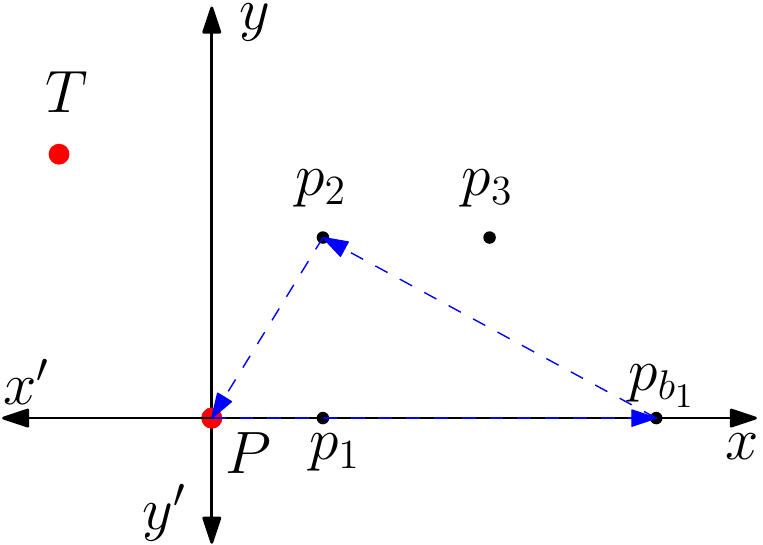}
  \captionof{figure}{Placement of pebble by oracle when first bit is $1$}
  \label{1 bit}
\end{minipage}%
\hfill
\begin{minipage}{.5\textwidth}
  \centering
  \includegraphics[width=.9\linewidth]{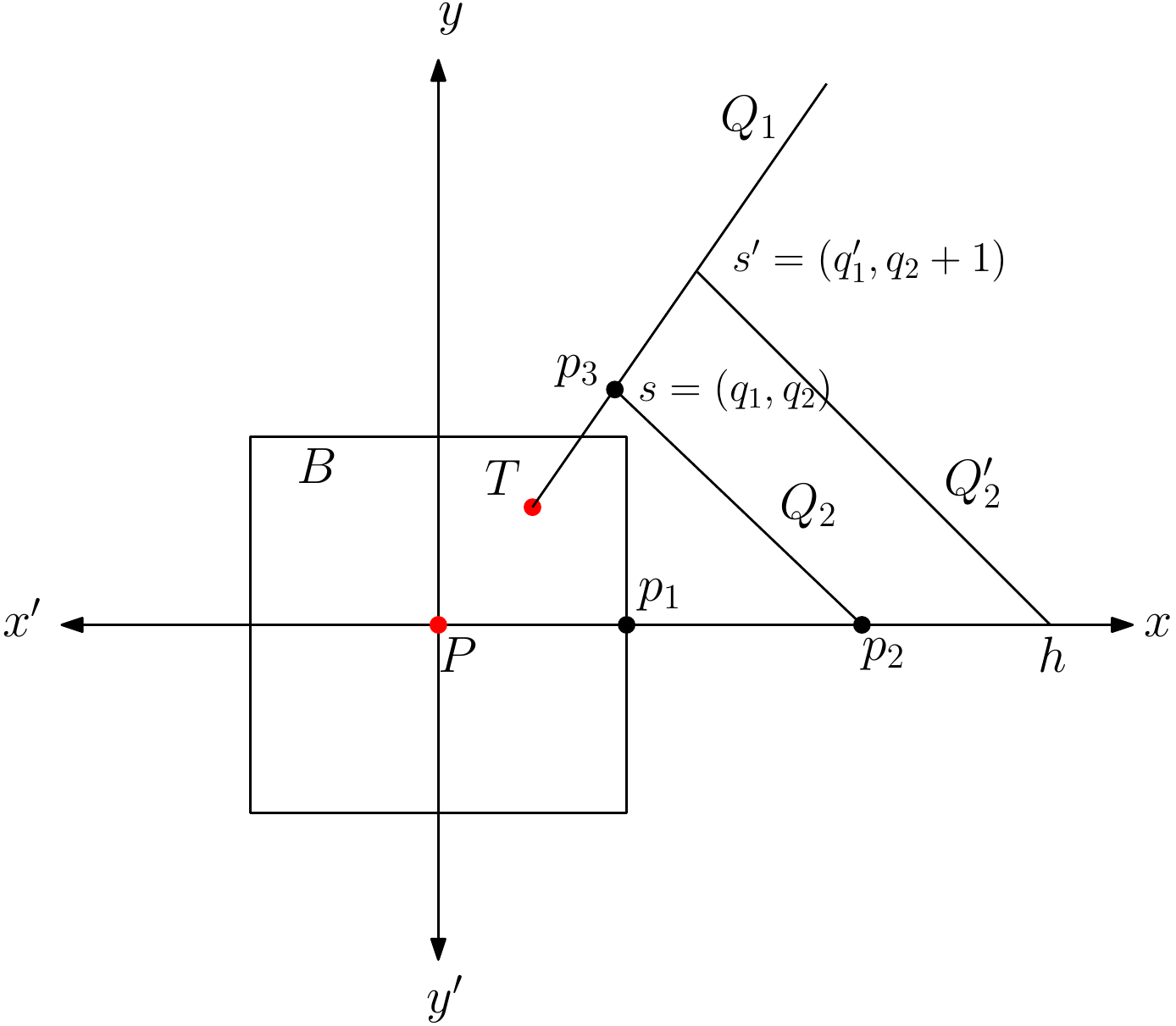}
  \captionof{figure}{Pebble placement when treasure inside $B$}
  \label{square}
\end{minipage}
\end{figure}

We extend the above idea for encoding any binary string $\mu$ as follows. In addition to the pebbles $p_1$, $p_2$, and $p_3$, one additional pebble for each of the bits of $\mu$ are placed. To be specific, for $1 \le i \le |\mu|$, a pebble $p_{b_i}$ is placed at $(2i+1,0)$ if the $i$-th bit is 1, else $p_{b_i}$ is placed at $(2i+2,0)$. Starting from $P$ to $p_{b_i}$, moving at an angle $\arctan\left(\frac{-1}{2i}\right)$ until a pebble is reached, then moving at an angle $\arctan\left(\frac{2i+1}{2i-1}\right)$, the agent reaches either $P$ or $p_1$ depending on the $i$-th bit is 1 or 0 respectively.

A difficulty that remains to be overcome is how the agent detects the end of the encoding. This is important because if the termination is not indicated, then there is a possibility that the agent moves to find more pebbles $p_{b_j}$, $j >|\mu|$, and continues its movement for an infinite distance. We use two additional pebbles named $p_{t_1}$ and $p_{t_2}$ for the specific requirement of termination detection. The position of these two pebbles $p_{t_1}$ and $p_{t_2}$ are as follows. If the 1st bit of the binary string $\mu$ is 1, i.e., $p_{b_1}$ is placed at $(3,0)$ then the pebbles $p_{t_1}$ and $p_{t_2}$ are placed at $(4,1)$ and $(2|\mu|+6,0)$, respectively. Otherwise, if the 1st bit is 0 then these two pebbles are placed at $(5,1)$ and $(2|\mu|+7,0)$, respectively. After visiting the pebble $p_{|\mu|}$ for the last bit of $\mu$, the agent returns to $P$, and moves as usual to find a pebble expecting to learn more bits of the binary string. From $P$, once it reaches $p_{t_2}$, it moves at an angle $\arctan\left(\frac{-1}{2(|\mu|+1)}\right)$ until a pebble  is reached. Note that the two pebbles $p_{t_1}$ and $p_{t_2}$ are placed in such a way that the angle $\angle Pp_{t_2}p_{t_1}=\arctan\left(\frac{-1}{2(|\mu|+1)}\right)$. Hence using the movement from $p_{t_2}$ at angle $\arctan\left(\frac{-1}{2(|\mu|+1)}\right)$ the agent reaches $p_{t_1}$ and from $p_{t_1}$ moving at angle $\arctan\left(\frac{2(|\mu|+1)+1}{2(|\mu|+1)-1}\right)$, it reaches to $p_{b_1}$. Since the following specific movement mentioned above, the agent reaches to a pebble, it initially assumed that it learned the bit zero. But moving west from $p_{b_1}$, it reaches another pebble (i.e., the pebble $p_1$), instead of origin. This special occurrence indicates the termination of the encoding to the agent. Hence in this way, the agent learns the binary string $\mu$, and the integer $\Delta$ whose binary representation is $\mu$. 

\noindent{\bf Finding the treasure inside the sector:} One more pebble  $p_T$ is  placed on the foot of the perpendicular drawn from $T$ on $L_{j+1}$ (refer Fig.\ref{ex-travel}). After learning the encoding of $\mu_j$, the agent decodes the integer $j$, and correctly identifies the two lines $L_j$ and $L_{j+1}$ inside the sector to help the agent in locating the exact location of the treasure.

A difficulty arises here while placing the pebble $p_T$ inside the sector as some pebbles that are already placed while the encoding of the sector number may be very close (at distance $<1$) from the possible position of $p_T$. To resolve this, we do the placement of the pebbles for encoding on positive x-axis if the position of the treasure is at the left half plane, and the placement of the pebbles are done on the negative x-axis, if the position of the treasure is at the right half plane. To instruct the agent which way it must move to find the pebbles for learning the encoding, one additional pebble $p_0$ is placed at $P$.  

Some specific cases need to be separately handled:  If the treasure is in a position $(x,y)$, such that $-1\le x \le 1$ and $y \ge -1$, as this again may create a problem in placing $p_T$ inside the prescribed position inside the sector. The details of these cases will be discussed while explaining the pebble placement strategy in the section \ref{pebbleplacement}.

\subsection{Pebble placement}\label{pebbleplacement}

The agent is initially placed at  $P$, and the treasure is placed at $T$. 
The oracle, knowing the initial position $P$ of the agent and the position $T=(x_T,y_T)$ of the treasure, places 
 $k$ pebbles in the Euclidean plane. Let $B$ be the square region bounded by the lines $x=1$, $x=-1$, $y=1$, and $y=-1$.  
 
 Based on the position of the treasure, the pebble placement is described using two different cases.
 
 \noindent{\bf Case 1:} 
 If $x_T>0$ and $T \not\in B$, then the placements of the pebbles are done as follows.

 \begin{enumerate}
 \item  Place a pebble $p_0$ at $P$.

 \item Draw $2^{k-8}$ half-lines $L_0, \cdots, L_{2^{k-8}-1}$, starting at the initial position $P$ of the agent, such that $L_0$ goes North and the angle between consecutive half-lines is $\pi/2^{k-8}$ for $i = 0, \cdots ,2^{k-8}-1$. The sector $S_i$
is defined as the set of points in the plane between $L_i$ and $L_{(i+1) \mod 2^{k-8}}$
, including the points on  $L_i$ and excluding the points on $L_{(i+1) \mod 2^{k-8}}$. If $T \in S_j$, for some $j \in \{0,1,\cdots,2^{k-8}-1\}$ then place pebbles as follows.
\begin{itemize}
    \item Place the pebbles $p_1$ at (-1,0), $p_2$ at (-1,-1) and $p_3$ at (-2,-1).
    \item Let $\mu_j$ be the binary representation of the integer $j$ with leading $\lfloor \log k\rfloor - \lfloor \log j\rfloor$ many zeros. If $0\le x_T \le 1$ and $y_T>1$, then $\mu_j=0 \cdot \mu_j$, else $\mu_j=1 \cdot \mu_j$.
    For $1\le \ell \le |\mu_j|$, if the $\ell$-th bit of $\mu_j$ is 1, then place a pebble at $(-2\ell-1,0)$, else place a pebble at $(-2\ell-2,0)$.
    \item If the 1st  bit of $\mu_j$ is 1, then place a pebble $p_{t_1}$ at (-4,-1), else place $p_{t_1}$ at (-5,-1).
    \item If the 1st  bit of $\mu_j$ is 1, then place a pebble $p_{t_2}$ at $(-2|\mu_j|-6,0)$, else place $p_{t_2}$ at $(-2|\mu_j|-7,0)$.
\end{itemize}
\item If $x_T<0$ and $T \not\in B$, 
  then the placements of the pebbles are done as follows.
 For each pebble placed at $(m,n)$, where $m\ne 0$ or $n \ne 0$ in the above case, place the corresponding pebble at $(-m,-n)$ in this case. Also, place no pebble at $P$.

\item If the first bit of $\mu_j$ is 0, then let $F$ be the foot of the perpendicular drawn from $T$ to $L_j$, else let $F$ be the foot of the perpendicular drawn from $T$ to $L_{j+1}$. Place a pebble $p_T$ at $F$ ( Lemma \ref{noconflict}  ensures that the pebbles are placed at a distance of at least 1 in this scenario).

\noindent{\bf Case 2:} If $x_T>0$ and $T \in B$, then the pebbles are placed as follows. 
\begin{itemize}
    \item Place a pebble $p_1$ at (1,0) (refer Fig. \ref{square}).
    \item Let $m_1=\tan\left(\pi-\arctan(\frac{-1}{2})-\arctan(3)\right)$ and $m_2=\tan(\pi-\arctan$ $\left(\frac{-1}{2}\right))$. Draw a line $Q_1$ through $T$ with slope $m_1$ and draw a line $Q_2$ through the point $(2,0)$ with slope $m_2$. Let $s=(q_1,q_2)$ be the point of intersection between these two lines. Let $s'$ be the point on the line $Q_1$ whose $y$ coordinate is $q_2+1$. Draw the lines $Q_2'$  parallel to $Q_2$ and go through $s'$. Let $h$  be the points of intersection of the lines $Q_2'$  with $x$-axis.
    
    Two additional pebbles $p_2$ and $p_3$ are placed as follows.   
    
  \begin{itemize}
        \item If $q_2< 1$, then place  $p_2$ at $h$ and  $p_3$ at $s'$.
        \item Otherwise, place  $p_2$ at $(2,0)$  and  $p_3$ at $s$.
 
  \end{itemize}
\end{itemize}
If $x_T <0$ and $T \in B$, then placement of the pebbles are done as follows. 
\begin{itemize}
    \item Place the pebbles $p_0$ at P and $p_1$ at (-1,0).
    \item Let $m_1=-\tan\left(\pi-\arctan(\frac{-1}{2})-\arctan(3)\right)$ and $m_2=-\tan(\pi-\arctan$ $\left(\frac{-1}{2}\right))$. Draw a line $Q_1$ through $T$ with slope $m_1$ and draw a line $Q_2$ through the point $(-2,0)$ with slope $m_2$. Let $r=(r_1,r_2)$ be the point of intersection between these two lines. Let $r'$ be the point on the line $Q_1$ whose $y$ coordinate is $r_2+1$. Draw the lines $Q_2'$  parallel to $Q_2$ and go through $r'$. Let $n$  be the points of intersection of the lines $Q_2'$  with $x$-axis.
    
    Two additional pebbles $p_2$ and $p_3$ are placed as follows.   
    
  \begin{itemize}
        \item If $r_2< 1$, then place  $p_2$ at $n$ and  $p_3$ at $r'$.
        \item Otherwise, place  $p_2$ at $(-2,0)$  and  $p_3$ at $r$.
 
  \end{itemize}
\end{itemize}

\end{enumerate}


\begin{algorithm2e}

\caption{\sc{PebblePlacement}}\label{alg:cap0}

Draw $2^{k-8}$ half lines $L_0,\cdots, L_{2^{k-8}-1}$ starting from $P$, where angle between two consecutive half-lines is $\frac{\pi}{2^{k-8}}$. Let Sector $S_i$ be the sector bounded by the half lines $L_i$ and $L_{i+1}$ and let $T\in S_{\Delta}$, $\Delta \in \{0,1,\cdots, 2^{k-8}-1$\} \\
\If{$x_T \ge 0$}
{
\If{$0\le x_T\le 1$ and $-1\le y_T \le 1$}
{
SquarePlacement(2) \\
}
\Else
{
Place a pebble $p_0$ at $P$ \\
\If{$ x_T\le 1$ and $y_T > 1$} 
{
NonSquarePlacement$(1,0)$ \\
Place a pebble $p_T$ at the foot of the perpendicular drawn from $T$ on $L_{\Delta}$. \\
}
\Else
{
NonSquarePlacement$(1,1)$ \\
Place a pebble $p_T$ at the foot of the perpendicular drawn from $T$ on $L_{\Delta+1}$. \\
}
}
}
\Else
{
\If{$-1\le x_T\le 0$ and $-1\le y_T \le 1$}
{
Place a pebble $p_0$ at $P$. \\
SquarePlacement(1) \\
}
\Else
{
\If{$-1\le x_T\le 0$ and $y_T > 1$} 
{
NonSquarePlacement(2,0) \\
Place a pebble $p_T$ at the foot of the perpendicular drawn from $T$ on $L_{\Delta}$. \\
}
\Else
{
NonSquarePlacement(2,1) \\
Place a pebble $p_T$ at the foot of the perpendicular drawn from $T$ on $L_{\Delta+1}$. \\
}
}
}

\end{algorithm2e}


\begin{algorithm2e}

\caption{\sc{SquarePlacement($count$)}}\label{alg:cap01}
Place a pebble $p_1$ at $((-1)^{count},0)$. \\
\If{$q_2<1$}
{
Place the pebbles $p_2$ at $h$ and $p_3$ at $s'$, respectively. \\
}
\Else
{
Place a pebble $p_2$ at $((-1)^{count}\cdot2,0)$ and $p_3$ at $s$, respectively. \\
}

\end{algorithm2e}


\begin{algorithm2e}

\caption{\sc{NonSquarePlacement($count,bit$)}}\label{alg:cap02}
Initially $l=2$. \\
Place the pebbles $p_1$ at $((-1)^{count},0)$, $p_2$ at $((-1)^{count},(-1)^{count})$ and $p_3$ at $((-1)^{count}\cdot 2,(-1)^{count})$, respectively. \\
$\mu_j$ be the binary representation of the integer $j$ with leading $\lfloor\log k \rfloor-\lfloor\log j\rfloor$ many zeroes. \\
$\mu_j=bit.\mu_j$ \Comment{Represents the concatenation of $bit$ value with $\mu_j$}\\
\If{$bit=1$}
{
Place a pebble at $((-1)^{count}\cdot 3,0)$. \\
}
\Else
{
Place a pebble at $((-1)^{count}\cdot 4,0)$. \\
}
\While{$l\leq k+1$}
{
\If{$\ell$-th bit of $\mu_j$ is $1$}
{
Place a pebble at $((-1)^{count}\cdot (2\ell+1),0)$. \\
}
\Else
{
Place a pebble at $((-1)^{count}\cdot(2\ell+2),0)$. \\
}
$l=l+1$ \\
}
\If{1st bit of $\mu_j$ is 1}
{
Place the pebbles $p_{t_1}$ at $((-1)^{count}\cdot 4,(-1)^{count})$ and $p_{t_2}$ at $((-1)^{count}\cdot(2|\mu_j|+6),0)$, respectively. \\
}
\Else
{
Place the pebbles $p_{t_1}$ at $((-1)^{count}\cdot 5,(-1)^{count})$ and $p_{t_2}$ at $((-1)^{count}\cdot (2|\mu_j|+7),0)$, respectively. \\
}
\end{algorithm2e}


\begin{algorithm2e}

\caption{\sc{AgentMovement}}\label{alg:cap1}

If a pebble is found at $P$ then set
$angle=\pi$ otherwise set $angle=0$.\\ 
$t=2$, $\mu=\epsilon$ \\
Start moving at an angle $angle$ with the positive $x$ axis. \label{step1} \\
\If{treasure is found}
{
Terminate
}
\Else
{
Continue moving in the same direction until the $t$-th pebble or the treasure is found. \\
\If{treasure found }
{
Terminate \\
}
\Else
{
$\ell$=FindBit($t,angle$) \\
\If{$\ell\in \{0,1\}$}
{
$\mu=\mu\cdot \ell$. \\
$t=t+1$. \\
Go to Step \ref{step1}
}
\Else
{
FindTreasure($\mu,angle$) \\
}
}
}

\end{algorithm2e}

 
\begin{algorithm2e}

\caption{\sc{FindBit($t,angle$)}}\label{alg:cap2}

Move at an angle $\pi-\theta_{t}$, where $\theta_{t}=\arctan(\frac{-1}{2t})$ until the treasure or a pebble is found. \\
\If{treasure found}
{
Terminate \\
}
\Else
{
Move at an angle $\pi-\beta_{t}$, where $\beta_{t}=\arctan(\frac{2t+1}{2t-1})$. \\
\If{treasure found}
{
Terminate
}
\ElseIf{$P$ is found}
{
return 1 \\
}
\ElseIf{a pebble is found at a point other than $P$}
{
\If{$angle=0$}
{
Move at an angle $\pi+\frac{\pi}{4}$. \\
}
\Else
{
Move at an angle $\pi-\frac{\pi}{4}$. \\
}
\If{$P$ is found}
{
return 0 \\
}
\Else
{
Continue its movement until $P$ is reached. \\
return 2 \\
}
}
}

\end{algorithm2e}


\begin{algorithm2e}

\caption{\sc{FindTreasure($\mu,angle$)}}\label{alg:cap4}

Let $\Delta$ be the integer whose binary representation is $\mu$  \\
\If{$angle=\pi$}
{
\If{$\mu_1=0$}
{
$val=\Delta$ \\
SectorTravel$(val,1,2)$ \\
}
\Else
{
$val=\Delta+1$ \\
SectorTravel$(val,1,1)$ \\
}
}
\Else
{
\If{$\mu_1=0$} 
{
$val=\Delta$ \\
SectorTravel$(val,2,1)$ \\
}
\Else
{
$val=\Delta+1$ \\
SectorTravel$(val,2,2)$ \\
}
}

\end{algorithm2e}


\begin{algorithm2e}

\caption{\sc{SectorTravel($val,count,num$)}}\label{alg:cap3}

Move at an angle $\frac{\pi}{2}+(-1)^{count}\left(\frac{\pi\cdot val}{2^{k-8}}\right)$ until a pebble or treasure is found. \\
\If{Treasure found}
{
Terminate. \\
}
\Else
{
Move at an angle $\pi+(-1)^{num}\frac{\pi}{2}$ until treasure is found. \\
Terminate. \\
}

\end{algorithm2e}

\subsection{Treasure hunt}\label{treasurehunt}
 
 Starting from $P$, the agent finds the treasure with the help of the pebble placed at different points on the plane. On a high level, the agent performs three major tasks: (1) Learn the direction of its initial movement (2) Learn the encoding of the sector number in which the treasure is located, and (3) Move inside the designated sector and find the treasure.
 
 The agent learns the direction of its initial movement by observing whether a pebble is placed at $P$ or not. If a pebble is placed, then it learns that the direction of its initial movement is west and pebble placement is done for the encoding of the sector number on the negative $x$ axis. Otherwise,  it learns that the direction of its initial movement is east and pebble placement is done for the encoding of the sector number on the positive $x$ axis. Then for each $j=1,2,\cdots$, it continues its movement in a specific path (depending on the value of $j$) and learns the $j$-th bit of the encoding until it detects the termination of the encoding. To be specific, the $j$-th bit of the encoding is learned by the agent using the movements in the following order from $P$. 
 
 \begin{itemize}
 \item Starting from $P$, move along $x$-axis until the $(j
 +1)$-th pebble is reached, 
 \item Move at angle $\arctan({\frac{-1}{2j}})$, and continue moving in this direction until a pebble is reached
 \item Move at an angle $\arctan({\frac{2j+1}{2j-1}})$ until $P$ or a pebble is found.
 \item If $P$ is found in the previous step, then the bit is 1.
 \item If a pebble is found, then move along $x$ axis towards $P$. If $P$ is encountered, then the bit is 0.
 \item If a pebble is encountered instead of $P$ in the previous step, then the agent learns that the encoding is completed.
 
\end{itemize} 
 Let $\mu$ be the binary string learned by the agent in the above process and let $\Delta$ be the integer whose binary representation of $\mu$. If the first bit of $\mu$ is 1, then the agent starts moving along $L_{\Delta+1}$ from $P$ until it hits a pebble or reaches the treasure. Once the pebble is reached, the agent changes its direction at angle $\frac{\pi}{2}$ if its initial direction of movement was west, else the agent changes its direction at angle $\frac{3\pi}{2}$. It continues its movement in the current direction until the treasure is reached.

The following lemma ensures that the pebbles are placed at a distance of at least 1 in step 4 of Case 1 in the above pebble placement strategy.

\begin{lemma}\label{noconflict}
If $T=(x_T,y_T)\in B'$, where $B'=\{(x,y)|$ $ 0\leq x \leq 1$ and $y_T>1\}$, the location of the foot of the perpendicular $F$ on $L_j$ is outside the square $B$.
\end{lemma}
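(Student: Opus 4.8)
The plan is to work in coordinates centred at $P$ (the origin) and to prove the slightly stronger statement that the $y$-coordinate of $F$ is strictly greater than $1$. Since $B=\{(x,y): -1\le x\le 1,\ -1\le y\le 1\}$, establishing $F_y>1$ already forces $F\notin B$, so I never need to track $F_x$ at all.

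First I would fix the angular description of the half-lines. Writing $\alpha_j$ for the angle that $L_j$ makes with the positive $x$-axis, the construction gives $\alpha_j=\frac{\pi}{2}-\frac{j\pi}{2^{k-8}}$, so that $L_0$ is the positive $y$-axis and the lines rotate clockwise as $j$ increases; in particular $\alpha_j\le\frac{\pi}{2}$, hence $\cos\alpha_j\ge 0$. Next I would pin down the angle $\theta_T$ of $T$. The hypothesis $T\in B'$ (that is, $0\le x_T\le 1$ and $y_T>1$) gives $y_T>x_T$, so $\theta_T\in(\frac{\pi}{4},\frac{\pi}{2}]$, and since $T\in S_j$ with the convention that $S_j$ is bounded counterclockwise by $L_j$ (it includes $L_j$ and excludes $L_{j+1}$), we get $\theta_T\le\alpha_j$. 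Thus $\theta_T\le\alpha_j\le\frac{\pi}{2}$, which is all the geometry I will use.

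The core computation is then short. As $F$ is the orthogonal projection of $T$ onto the ray $L_j$ with unit direction $(\cos\alpha_j,\sin\alpha_j)$, its $y$-coordinate is $F_y=(x_T\cos\alpha_j+y_T\sin\alpha_j)\sin\alpha_j$. I would subtract $y_T$ and factor, using $x_T=|PT|\cos\theta_T$ and $y_T=|PT|\sin\theta_T$:
\[
F_y-y_T=\cos\alpha_j\bigl(x_T\sin\alpha_j-y_T\cos\alpha_j\bigr)=\cos\alpha_j\cdot|PT|\,\sin(\alpha_j-\theta_T).
\]
Both factors are non-negative: $\cos\alpha_j\ge 0$ by the first step, and $\sin(\alpha_j-\theta_T)\ge 0$ because $0\le\alpha_j-\theta_T<\frac{\pi}{2^{k-8}}\le\frac{\pi}{2}$. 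Hence $F_y\ge y_T>1$, and therefore $F\notin B$.

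I expect the only genuine subtlety to be geometric bookkeeping rather than any hard estimate: confirming that it is $L_j$ (and not $L_{j+1}$) that lies counterclockwise of $T$, so that $\theta_T\le\alpha_j$, and checking that the projection parameter $x_T\cos\alpha_j+y_T\sin\alpha_j$ is positive so that $F$ really lands on the half-line $L_j$ and not on its opposite extension (this is immediate from $x_T\ge 0$, $y_T>0$, $\cos\alpha_j\ge 0$, $\sin\alpha_j>0$). The degenerate case $x_T=0$, where $T$ lies on $L_0$ and $F=T$, is absorbed automatically, since then $\alpha_j=\theta_T=\frac{\pi}{2}$ and the displayed identity reduces to $F_y=y_T$. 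Notably, no delicate dependence on $k$ enters, so the argument holds uniformly over all admissible $j$.
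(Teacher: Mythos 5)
Your proof is correct, and it takes a recognizably different route from the paper's. The paper argues by perpendicularity of slopes: writing $F=(h,k)$, it uses $PF\perp FT$ to get $m_1 m_2=-1$ with $m_1=k/h>0$, hence $m_2<0$; this leaves two sign patterns, the pattern ($y_T>k$, $x_T<h$) is dismissed with the assertion that $F$ cannot lie to the right of the line $x=x_T$, and the surviving case gives $1<y_T<k$, i.e.\ $F$ above the square. Your argument reaches the same pivotal inequality ($F$ lies weakly above $T$) but by an explicit projection computation: the identity $F_y-y_T=\cos\alpha_j\,|PT|\,\sin(\alpha_j-\theta_T)\ge 0$, whose sign analysis is driven by the sector bookkeeping $\theta_T\le\alpha_j\le\frac{\pi}{2}$ and $\alpha_j-\theta_T<\frac{\pi}{2^{k-8}}\le\frac{\pi}{2}$. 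What your version buys is precisely what the paper leaves implicit: the eliminated case in the paper's proof (the ``not possible'' step) is exactly the statement $h\le x_T$, which the paper asserts without proof and which your identity certifies; moreover the paper's slope argument silently assumes nondegeneracy ($h,k>0$ and $F\neq T$, since otherwise $m_1$ or $m_2$ is undefined or the product-of-slopes relation fails), whereas you handle $x_T=0$ (where $F=T$) explicitly and also check that the foot lands on the half-line $L_j$ rather than its opposite extension. The cost is that you must commit to the angular convention $\alpha_j=\frac{\pi}{2}-\frac{j\pi}{2^{k-8}}$ (clockwise rotation), which the paper never states outright but which is forced by the requirement that the sectors cover the half-plane $x\ge 0$ containing $T$; it would be worth one sentence in your write-up acknowledging that this is the only consistent reading of the construction.
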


\begin{proof}
Let the position of $F$ be $(h,k)$. Let $m_1$ be the slope of the line $PF$ and $m_2$ be the slope of the line $FT$ (See Fig. \ref{slope}). Now as $PF \perp FT$, therefore $m_1\cdot m_2=-1$. The slope $m_1= \frac{k}{h}$ is positive as $k>0$ and $h>0$, so $m_2= \frac{y_T-k}{x_T-h}$ must be negative to satisfy the above condition. Now, $m_2$ can be negative if one of the following cases is true.
\begin{itemize}
\item Case-1: $y_T>k$ and $x_T<h$, 
    \item Case-2: if $y_T<k$ and $x_T>h$,
    
\end{itemize}
If Case 1 is true, then the point $F$ must be on the right side of the line $x=X_T$, which is not possible. Therefore, Case 2 must be true, i.e., $1<y_T<k$ and $x_T>h$. This implies that $F$ is outside $B$.
\end{proof}
\qed

 The execution of the algorithm is explained with the help of a example.

 \begin{figure}[h]
  \centering
  \includegraphics[width=0.7\linewidth]{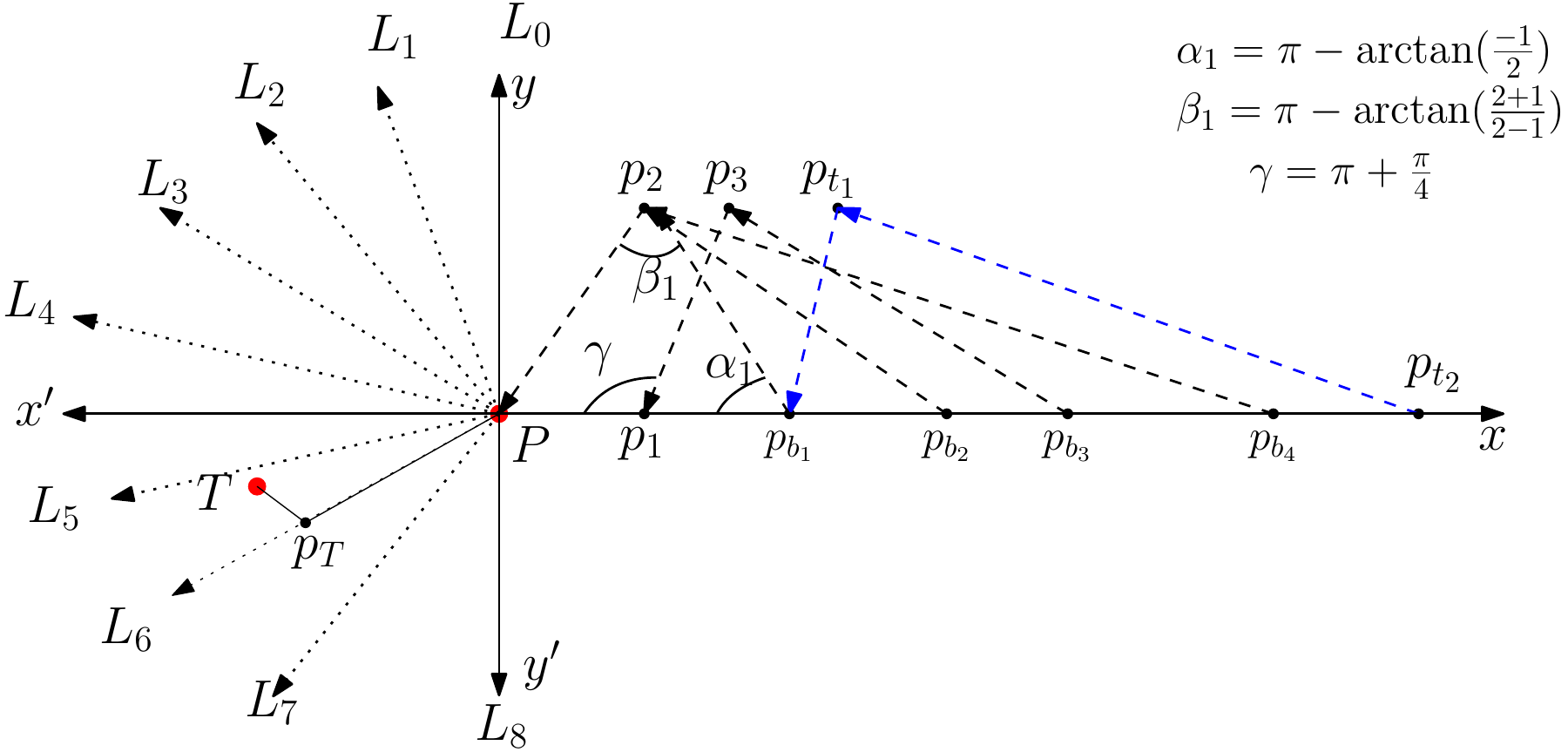}
  \caption{Figure showing demonstration of Example-1}
  \label{ex-travel}
\end{figure}

 \noindent\textbf{Example-1:} Given $11$ pebbles, the oracle divides the plane into $2^{11-8}$ sectors. Suppose the treasure is placed in the sector $5$ (as depicted in Fig. \ref{ex-travel}). Moreover, consider the position of the treasure is outside the square $B$. So, the oracle places the pebbles by following the algorithm \textit{PebblePlacement \ref{alg:cap0}}, such that the agent, after following the algorithm \textit{AgentMovement \ref{alg:cap1}} learns the direction of its initial movement and further learns the encoding of the sector number (i.e., 101 in this case) in which the treasure is located in the following manner. 
 
 An iteration of the algorithm \ref{alg:cap1} is defined as a cycle which consists of the agent's movement starting from $P$ and returning back to $P$.
 In the first iteration, the agent initially at $P$, does not find a pebble at $P$. The algorithm \ref{alg:cap1} instructs the agent to move towards east until it encounters a second pebble $p_{b_1}$ along positive $x$-axis. From $p_{b_1}$ the agent moves at an angle $\pi-\arctan(\frac{-1}{2})$ until it encounters a pebble $p_2$. From $p_2$ it further moves at an angle $\pi-\arctan \left(\frac{2+1}{2-1}\right)$ until it reaches the origin $P$. So, after completion of the first iteration (i.e., the path traversed $P\rightarrow p_{b_1}\rightarrow p_2\rightarrow P$) the agent learns that the first bit is $1$.
 In the second iteration, the agent again moves towards east until it reaches the third pebble $p_{b_2}$. From $p_{b_2}$, the agent moves at an angle $\pi-\arctan\left(\frac{-1}{2\cdot 2}\right)$ until it encounters a pebble $p_2$, from $p_2$ it further moves at an angle $\pi-\arctan\left(\frac{2\cdot 2+1}{2\cdot 2-1}\right)$ until it reaches the origin $P$. So, after completion of the second iteration (i.e., the path traversed $P\rightarrow p_{b_2}\rightarrow p_2\rightarrow P$) the agent learns that the second bit is again $1$. In the third iteration, the agent after a similar movement towards east reaches the fourth pebble $p_{b_3}$ along the positive x-axis. From $p_{b_3}$, it further moves at an angle $\pi-\arctan(\frac{-1}{2\cdot 3})$ until it reaches the pebble $p_3$. From $p_3$, it moves along an angle $\pi-\arctan \left(\frac{2\cdot 3+1}{2\cdot 3-1}\right)$ until it reaches a pebble $p_1$. From $p_1$, the agent finally moves at an angle $\pi+\frac{\pi}{4}$ until it reaches $P$. So, after completion of the third iteration (i.e., the path traversed $P\rightarrow p_{b_3}\rightarrow p_3 \rightarrow p_1\rightarrow P$) the agent learns that the third bit is 0. In the fourth iteration, with a similar movement the agent reaches the pebble $p_{b_4}$, and from this position the agent moves at an angle $\pi-\arctan\left(\frac{-1}{2\cdot 4}\right)$ until it reaches $p_2$, from $p_2$ it further moves at an angle $\pi-\arctan \left(\frac{2\cdot 4+1}{2\cdot 4-1}\right)$ until it reaches $P$. So, in the fourth iteration (i.e., the path traversed $P\rightarrow p_{b_4}\rightarrow p_2\rightarrow P$), the agent learns that the fourth bit is $1$. In the fifth iteration, the agent reaches the fifth pebble, i.e., $p_{t_2}$ (refer Fig. \ref{ex-travel}), from $p_{t_2}$ it moves at an angle $\pi-\arctan\left(\frac{-1}{2\cdot 5}\right)$ until it reaches a pebble $p_{t_1}$, from this position it further moves at an angle $\pi-\arctan \left(\frac{2\cdot 5+1}{2\cdot 5-1}\right)$ until it reaches a pebble $p_{b_1}$. Further from $p_{b_1}$, the agent further moves at an angle $\pi+\frac{\pi}{4}$ until it reaches $P$. Since the agent encounters the pebble $p_1$ after its last movement from $p_{b_1}$, this gives the knowledge to the agent that termination is achieved. Hence the binary string obtained by the agent is $\mu=1101$(say).
 
 Then, the agent by following the algorithm \textit{AgentMovement \ref{alg:cap1}} decodes that somewhere in sector $5$ the treasure is located. Further, since $\mu_1=1$, the agent then follows the algorithms \textit{FindTreasure \ref{alg:cap4}} and \textit{SectorTravel \ref{alg:cap3}} to finally reach the treasure by traversing the half-line $L_6$ and encountering the pebble $p_T$, from which moving at an angle $\pi+\frac{\pi}{2}$ the agent ultimately reaches the treasure $T$.

\section{Complexity}\label{correct}
In this section, we give the correctness and an upper bound on the cost of finding treasure from the proposed algorithms.

The following two lemmas show the algorithm's correctness when the treasure is inside $B$ and the upper bound of the cost of treasure hunt.

\begin{lemma}\label{Box-correct}
With 3 pebbles and the treasure located inside $B$, the agent successfully finds the treasure.
\end{lemma}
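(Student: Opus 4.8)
The plan is to show that the three pebbles produced by \textsc{SquarePlacement} (Algorithm~\ref{alg:cap01}) deflect the agent's generic routine (Algorithm~\ref{alg:cap1}) onto a polyline whose final leg lies along the line through $T$, so the agent halts exactly at $T$. First I would remove one case by symmetry: for $x_T<0$ the construction is the reflection across the $y$-axis of the $x_T\ge 0$ construction (the two slopes and the base point are negated and a pebble $p_0$ is added at $P$), and the agent's execution is the mirror image of the $x_T\ge 0$ execution (it detects $p_0$ at $P$ and initialises $angle=\pi$). Hence it suffices to analyse $x_T\ge 0$ and invoke reflection for the other case.

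The core is a short geometric claim about the fixed pebbles. Let $Q_1$ be the line of slope $m_1=\tan\!\big(\pi-\arctan(-\tfrac12)-\arctan 3\big)$ through $T$, and let $m_2=\tan\!\big(\pi-\arctan(-\tfrac12)\big)$; one checks $m_1=-1$ and $m_2=\tfrac12$. By construction $p_3$ (either $s$ or $s'$) lies on $Q_1$, and $p_2$ and $p_3$ are collinear along the line of slope $m_2$ (the line $Q_2$ or its parallel copy $Q_2'$). I would then trace the execution: the agent runs along the $x$-axis to the entry pebble from which \textsc{FindBit} (Algorithm~\ref{alg:cap2}) is invoked, and \textsc{FindBit} performs its two fixed turns. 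The crucial point is that these two turns realise exactly the heading of slope $m_2$ followed by the heading of slope $m_1$; since $m_1$ is \emph{defined} as the heading obtained after both turns (the second turn being by $\arctan 3$), the first turn carries the agent onto $p_3$ along the slope-$m_2$ line and the second sends it from $p_3$ along $Q_1$, which passes through $T$. This is the very move sequence \textsc{FindBit} uses elsewhere to read a bit and return to $P$ or $p_1$; the box placement merely relocates the fixed pebbles onto the lines through $T$ so that the identical sequence terminates at the treasure instead.

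I would dispatch the two sub-cases of \textsc{SquarePlacement} uniformly. When $q_2\ge 1$ the fixed pebbles are $p_2=(2,0)$ and $p_3=s$ on $Q_2$; when $q_2<1$ they are shifted up one unit to $h$ and $s'$ on the parallel line $Q_2'$. In both sub-cases $s$ (resp.\ $s'$) stays on $Q_1$, so the final-leg argument is unchanged and the shift only alters which pebble serves as the entry point. The sole purpose of the split is the separation requirement: I would verify that the unit vertical shift makes all pairwise distances at least $1$ precisely when $T$ sits too low for the unshifted placement, and I would separately treat the degenerate positions of $T$ inside $B$ (for instance $T$ on the $x$-axis, or $x_T=0$) so that the separation is preserved and $Q_1\cap Q_2$ stays well defined.

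The step I expect to be the main obstacle is the orientation bookkeeping underlying the second paragraph. Because the same turning magnitudes are used in the bit-reading analysis to send the agent back to $P$ or $p_1$, I must pin down both the magnitude and the \emph{sense} of each turn against this special placement, and confirm that the realised headings point toward $p_3$ and then toward $T$ rather than away. Equally delicate is showing that the agent meets $T$ strictly before any competing event---returning to $P$, arriving at $p_1$, or hitting a further pebble on the $x$-axis---that would otherwise make it record a bit or continue searching; this amounts to checking that no spurious pebble lies on either of the two non-trivial legs and that the entry pebble is consistent with the agent's initial direction.
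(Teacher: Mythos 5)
Your proposal takes essentially the same route as the paper's own proof: reduce to $x_T\ge 0$ by reflection (the $x_T<0$ case being the mirrored placement with $p_0$ at $P$ and $angle=\pi$), then trace the agent east past the ignored pebble $p_1$ to $p_2$ (at $(2,0)$ or $h$), along the slope-$m_2$ line to $p_3$ (at $s$ or $s'$), and finally along $Q_1$ through $T$, with the sub-cases $q_2\ge 1$ and $q_2<1$ treated identically. If anything, you are more explicit than the paper, which asserts the final leg "finds the treasure" without computing $m_1=-1$, $m_2=\tfrac{1}{2}$, or noting that $m_1$ is by definition the heading after both turns; the orientation and spurious-pebble issues you flag as remaining obstacles are likewise left unaddressed in the published proof.
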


\begin{proof} 
When the treasure is present inside the square $B$, the oracle places a pebble $p_0$ at $P$, if the treasure is located in the left half of $y$-axis. Otherwise, no pebble is placed at $P$ as discussed in Case-2 of section \ref{pebbleplacement} (also refer to lines 3 and 14 of algorithm \ref{alg:cap0}). So, the agent starts its movement from $P$ along an angle $\pi$, i.e., along negative $x$-axis if it finds a pebble $p_0$ at $P$ (refer lines 1 and 2 of algorithm \ref{alg:cap1}) otherwise, the agent moves along an angle $0$, i.e., positive $x$-axis if no pebble is found at $P$ (refer line 4 of algorithm \ref{alg:cap1}). Now we have the following cases depending on the presence of a pebble at $P$.
\begin{itemize}
    \item \textit{Pebble not found at $P$}: In this case, the agent while moving along positive $x$-axis either finds the treasure and the algorithm terminates (refer to lines 7 and 8 of Algorithm \ref{alg:cap1}). Otherwise, the agent finds a pebble $p_1$ placed at $(1,0)$, which it ignores as instructed in the algorithm \textit{FindBit} \ref{alg:cap2} and continues to move until it reaches the treasure or encounters a pebble. If the treasure is not found, then the pebble $p_2$ is placed by the oracle at either $h$ or $(2,0)$ (refer to lines 10 and 13 of algorithm \ref{alg:cap01}). The agent, after encountering the second pebble, moves along an angle $\pi-\theta_1$, where $\theta_1=\arctan \left(\frac{-1}{2}\right)$ until the treasure or a pebble is found. If a pebble is found, then we have the following cases:
    \begin{itemize}
        \item \textit{$p_2$ placed at (2,0)}: In this case, the agent finds the pebble $p_3$ at $s$ (refer to line 14 of algorithm \ref{alg:cap01}), from which it further moves along an angle $\pi-\beta_{1}$, where $\beta_1=\arctan(3)$ and finds the treasure.
        \item \textit{$p_2$ placed at $h$}: In this case, the agent finds the pebble $p_3$ at $s'$ (refer to line 11 of algorithm \ref{alg:cap01}), from which it further moves along an angle $\pi-\beta_{1}$, where $\beta_1=\arctan(3)$ and finds the treasure.
    \end{itemize}
    \item \textit{Pebble found at $P$}: In this case, the agent moves along negative $x$-axis and performs the similar task as described above. The reason being, the pebbles are placed in a similar manner, just on the adjacent half (i.e., left half of $y$-axis) as discussed in the above case (refer to $x_T<0$ in case 2 of section \ref{pebbleplacement}). 
\end{itemize}
\qed
\end{proof}

\begin{figure}[h]
  \centering
  \includegraphics[width=0.55\linewidth]{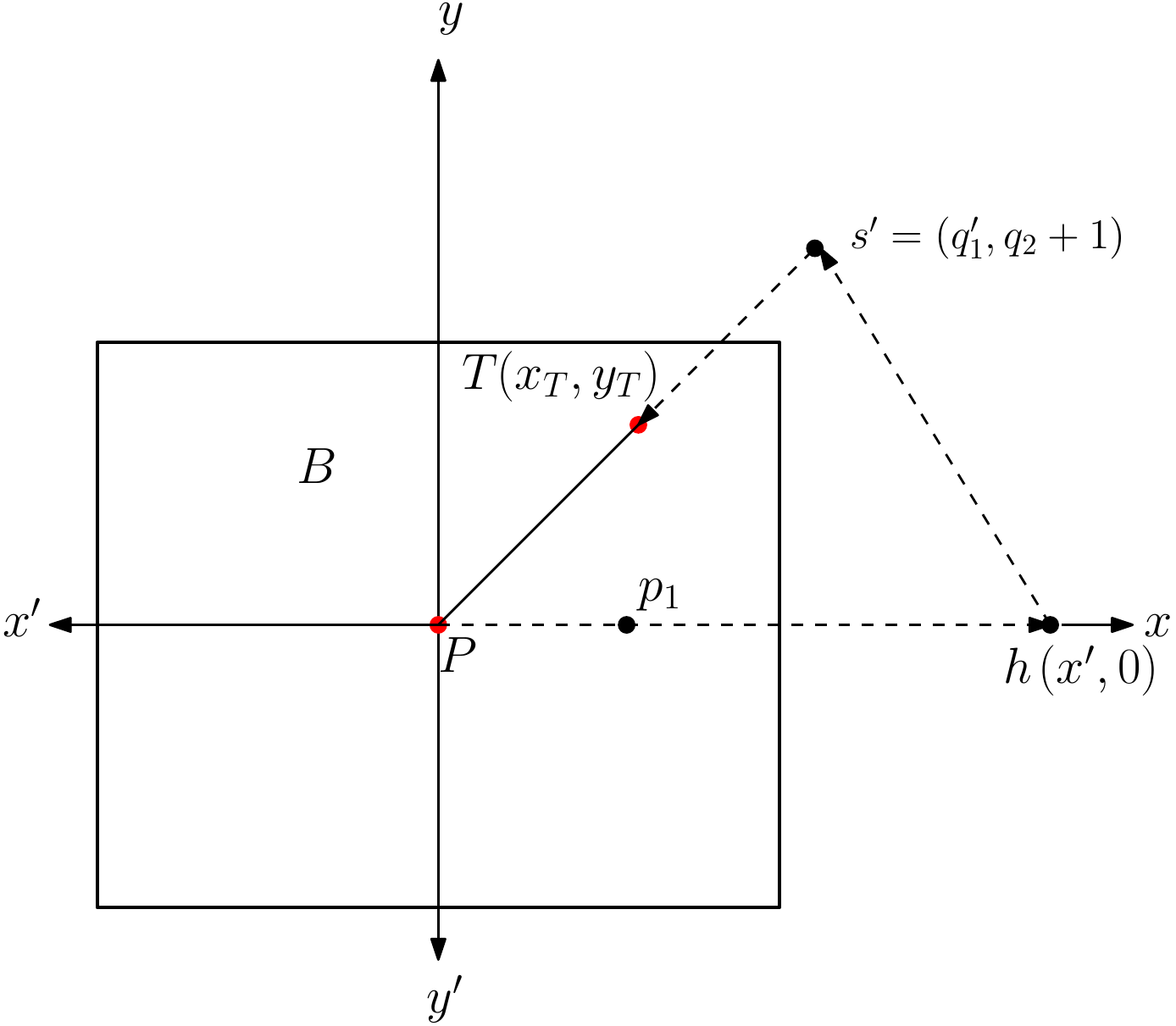}
  \caption{Traversal of an agent when the treasure is inside the square}
  \label{square-c}
\end{figure}

\begin{lemma}\label{Box-complex}
When the treasure is located inside $B$, the agent starting from $P$ successfully finds the treasure at cost $O(D)$.
\end{lemma}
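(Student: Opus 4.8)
The plan is to leverage \textbf{Lemma \ref{Box-correct}}, which already establishes that when $T\in B$ the agent reaches the treasure along the polygonal path $P \to p_1 \to p_2 \to p_3 \to T$: it first walks along the $x$-axis (passing through $p_1$) until it meets $p_2$, then follows one fixed-angle leg to $p_3$, and a second fixed-angle leg to $T$. Since $p_1$ lies on the first leg, the total cost equals $|Pp_2| + |p_2p_3| + |p_3T|$, so it suffices to bound each of these three segment lengths by an absolute constant.

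First I would record that the two slopes driving the construction simplify to constants independent of $T$: using $\arctan(-1/2)=-\arctan(1/2)$ one gets $m_2=\tan(\pi-\arctan(-1/2))=\tfrac12$ and $m_1=\tan(\pi-\arctan(-1/2)-\arctan 3)=-1$. Consequently every auxiliary point used in the placement — the intersection $s=(q_1,q_2)$ of $Q_1$ (through $T$, slope $m_1$) with $Q_2$ (through the fixed point $(\pm 2,0)$, slope $m_2$), the shifted point $s'$ on $Q_1$, and the foot $h$ of $Q_2'$ on the $x$-axis — is obtained by solving linear systems whose only $T$-dependent data is the intercept of $Q_1$. Hence the coordinates of $s,s',h$ (and therefore of $p_2$ and $p_3$ in either placement sub-case) are \emph{affine} functions of $(x_T,y_T)$ whose coefficients are absolute constants determined by $m_1,m_2$.

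Now I would invoke compactness. Because $T\in B$ forces $|x_T|\le 1$ and $|y_T|\le 1$, the point $T$ ranges over a compact set, and an affine image of a compact set is bounded; thus there is an absolute constant $R$ with $p_1,p_2,p_3,T$ all contained in the disk of radius $R$ about $P$. The trajectory consists of at most four straight segments inside this disk, so its total length — i.e., the cost — is at most $8R=O(1)$. The two placement sub-cases ($q_2<1$ versus $q_2\ge 1$) are covered by the same affine-boundedness argument, and the case $x_T<0$ follows verbatim after reflecting across the $y$-axis (the placement there is the mirror image and the agent simply walks west instead of east). Finally, $T\in B$ also gives $D=|PT|\le\sqrt 2$, so the constant cost bound is exactly the asserted $O(D)$ bound in this bounded-$D$ regime.

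The step I expect to be the main obstacle is the bookkeeping that turns ``affine with constant coefficients'' into explicit closed forms for $h,s,s'$ and then verifies that the leg directions produced by these points coincide with the prescribed movement angles $\pi-\arctan(-1/2)$ and $\pi-\arctan 3$ — i.e., that the trajectory really is the three named segments rather than a path that overshoots a pebble. A secondary subtlety is purely presentational: the cost here is genuinely $\Theta(1)$ rather than a nontrivial function of $D$, and one should note that this constant is precisely what gets absorbed into the additive term of the global cost bound, the $D(\sin\theta'+\cos\theta')$ contribution arising only outside $B$.
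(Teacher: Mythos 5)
Your route is genuinely different from the paper's, and in its core it is more rigorous. The paper's proof simply names the worst-case placement ($p_2$ at $h$, $p_3$ at $s'$), writes the exact cost $|Ph|+|hs'|+|s'T|$ in coordinates, and then concludes $O(D)$ by invoking the triangle inequality $D=|PT|<|Ph|+|hs'|+|s'T|$ --- which, as written, derives an upper bound from a lower bound and never actually bounds the coordinates $x'$, $q_1'$, $q_2$ at all. Your argument supplies exactly the ingredient that is missing there: the slopes are absolute constants ($m_1=-1$, $m_2=\tfrac{1}{2}$, both correctly computed), hence $s$, $s'$, $h$, and therefore $p_2,p_3$, are affine images of $(x_T,y_T)$ with constant coefficients, hence by compactness of $B$ the whole trajectory lies in a disk of constant radius about $P$, giving a uniform constant bound on the cost. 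That is a complete derivation of an $O(1)$ cost bound, and it handles both placement sub-cases ($q_2<1$ versus $q_2\ge 1$) and the mirrored case $x_T<0$ cleanly.

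The weak point is your last step: ``cost $=O(1)$ and $D\le\sqrt{2}$, hence cost $=O(D)$.'' That implication goes the wrong way. An upper bound on $D$ shows $O(D)\subseteq O(1)$, not $O(1)\subseteq O(D)$; to turn a constant into $O(D)$ you would need $D$ bounded \emph{below} by a positive constant, and it is not: $T$ can lie arbitrarily close to $P$ inside $B$, while the agent must still walk along the axis to the second pebble (distance at least $1$) before its first turn, so the cost stays $\Omega(1)$ as $D\to 0$. You yourself observe that the cost is genuinely $\Theta(1)$, but you dismiss the conversion to $O(D)$ as ``purely presentational''; it is actually a logical gap, and strictly read the lemma's bound can only mean $O(D+1)$, i.e.\ a constant additive cost in this regime. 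In fairness, you have not missed anything the paper supplies --- its own proof suffers from the same defect, and in a starker form, since its stated justification is the reversed triangle inequality. But the honest conclusion of your argument is ``cost bounded by an absolute constant,'' and you should either state it that way or make explicit the convention under which a constant counts as $O(D)$.
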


\begin{proof}
The treasure is located at $(x_T,y_T)$ and let the co-ordinates of $h$ be $(x',0)$.
The worst possible placement of pebbles $p_2$ and $p_3$ by the oracle are at $h$ and $s'$ (where $s'=(q'_1,q_2+1)$ refer Fig. \ref{square-c} and refer to the lines 10 and 11 of algorithm \ref{alg:cap01}), respectively. So, the traversal of the agent to reach the treasure will be along the path $Ph\rightarrow hs'\rightarrow s'T$. The total cost of this traversal is as follows: $x'+\sqrt{(x'-q'_1)^2+(q_2+1)^2}+\sqrt{(q'_1-x_T)^2+(q_2+1-y_T)^2}$ (where $|Ph|=x'$, $|hs'|=\sqrt{(x'-q'_1)^2+(q_2+1)^2}$ and $|s'T|=\sqrt{(q'_1-x_T)^2+(q_2+1-y_T)^2}$). Now, since $|PT|< |Ph|+|hs'|+|s'T|$, i.e., $D< |Ph|+|hs'|+|s'T|$ (since $|PT|=D$). Hence, in this case, the cost of reaching the treasure is $O(D)$.  
\qed
\end{proof}

Lemma \ref{NotBoxbit-cost}, Lemma \ref{NotBox-binary}, Lemma \ref{NotBox-Sectorcorrect} and Lemma \ref{NotBox-sector} shows the correctness of the algorithm when the treasure is located outside $B$.

\begin{lemma}\label{NotBoxbit-cost}
When the treasure is outside $B$, the agent successfully finds the $j$-th bit of the binary string $\mu$ at cost $O(j)$.
\end{lemma}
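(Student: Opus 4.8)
The plan is to fix an index $j \le |\mu|$ and follow the agent through the single iteration of Algorithm \textsc{FindBit} that decodes the $j$-th bit, accounting separately for its correctness and its cost. Throughout I work in the coordinate frame fixed by the placement: since $T \notin B$, the encoding pebbles lie on the positive (resp.\ negative) $x$-axis and the fixed pebbles $p_1,p_2,p_3$ sit at distance $O(1)$ from $P$, while the $j$-th encoding pebble $p_{b_j}$ is placed at an $x$-coordinate of absolute value $2j+1$ (if the bit is $1$) or $2j+2$ (if the bit is $0$). The decoding trajectory for the $j$-th bit is a concatenation of a constant number of straight segments: first $P \to p_{b_j}$ along the axis; then the turn determined by $\arctan\!\left(\frac{-1}{2j}\right)$ to a fixed pebble; then the turn determined by $\arctan\!\left(\frac{2j+1}{2j-1}\right)$; and, only in the bit-$0$ case, a final axis move back to $P$.

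For correctness I would verify the two angle conditions that make the construction self-correcting. First, that the line through $p_{b_j}$ with slope $-\frac{1}{2j}$ meets exactly one fixed pebble --- $p_2$ when the bit is $1$ and $p_3$ when the bit is $0$ --- which is immediate from the collinearity $\mathrm{slope}(p_{b_j},p_2) = -\frac{1}{2j}$ together with the one-unit offset of the bit-$0$ placement along the axis. Second, that from that fixed pebble the line of slope $\frac{2j+1}{2j-1}$ terminates at $P$ in the bit-$1$ case but at $p_1$ in the bit-$0$ case. The key point is that these two outcomes are distinguishable by the agent using only its ability to recognise the special point $P$: reaching $P$ directly (Algorithm \textsc{FindBit} returns $1$) versus reaching a pebble and then sliding along the axis into $P$ (returns $0$). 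I would also check that none of the turning segments passes through a spurious pebble that would halt the agent early; this holds because every encoding pebble other than $p_{b_j}$ lies on the axis, whereas the two turning segments leave the axis immediately.

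For the cost bound I would simply add the segment lengths. The axis segment $P \to p_{b_j}$ has length $2j+1$ or $2j+2$, i.e.\ $O(j)$. The segment $p_{b_j} \to p_2$ (or $p_3$) joins a point of abscissa $\Theta(j)$ to a point at distance $O(1)$ from the origin, so its length is at most $|P p_{b_j}| + O(1) = O(j)$ by the triangle inequality (equivalently $\sqrt{(2j)^2 + O(1)} = O(j)$ by direct computation). The remaining pieces --- the segment from the fixed pebble to $P$ or to $p_1$, and, for bit $0$, the closing axis move from $p_1$ to $P$ --- connect points all lying within $O(1)$ of the origin and hence contribute only $O(1)$. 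Summing the constantly many segments yields total cost $O(j)$, the claimed bound.

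The main obstacle is the correctness half rather than the cost half: one must certify that the \emph{same} pair of movement rules, instantiated with the parameter $j$, produces two geometrically distinct but agent-indistinguishable endings --- termination exactly at $P$ versus termination at $p_1$ --- and that this dichotomy coincides with the encoded bit. This is precisely where the placement of $p_1,p_2,p_3$ and the exact choice of the angles $\arctan\!\left(\frac{-1}{2j}\right)$ and $\arctan\!\left(\frac{2j+1}{2j-1}\right)$ matter, and it is the only part of the argument that requires verifying slope identities rather than a routine length estimate. The cost bound, once the three-or-four-segment path is identified, is a direct application of the triangle inequality.
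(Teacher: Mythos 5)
Your overall route is the same as the paper's: follow the agent through the constantly many segments of the $j$-th decoding round ($P \to p_{b_j}$ along the axis, one diagonal segment to a fixed pebble, then a return to $P$ either directly or via $p_1$), observe that only the first two segments have length $\Theta(j)$ while the rest are $O(1)$, and sum. Your cost half is sound and matches the paper's accounting (the paper bounds the two long segments by $2j+7$ and $\sqrt{(2j+2)^2+1}$ respectively).

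The problem is in your correctness half, specifically the second ``slope identity'' you propose to verify: it is false that the line of slope $\frac{2j+1}{2j-1}$ through the fixed pebble terminates at $P$. For $j=1$ this would be the line of slope $3$ through $p_2=(1,1)$, which meets the $y$-axis at $(0,-2)$, not at the origin. The quantity $\arctan\bigl(\frac{2j+1}{2j-1}\bigr)$ is not the slope of the second segment but the \emph{relative turn} between the two diagonal headings: the segments $p_2 \to P$ and $p_3 \to p_1$ both have slope $1$, and by the tangent addition formula $\arctan\bigl(\frac{2j+1}{2j-1}\bigr) = \frac{\pi}{4} + \arctan\bigl(\frac{1}{2j}\bigr)$, which is exactly the angle between the incoming direction (slope $-\frac{1}{2j}$) and the outgoing one (slope $1$). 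The correct justification of the dichotomy is translation invariance: the bit-$0$ triple $\bigl(p_{b_j},p_3,p_1\bigr)$ is the bit-$1$ triple $\bigl(p_{b_j},p_2,P\bigr)$ shifted by one unit along the axis, so the identical pair of turns traces a congruent path ending at $p_1$ instead of $P$, and the agent distinguishes the two endings by sliding along the axis and recognizing $P$. (In fairness, the paper's own angle bookkeeping has the same absolute-versus-relative confusion, and its proof of this lemma never verifies these identities at all; it only narrates the path and sums lengths.) Two smaller points: your ``no spurious pebble'' check covers only on-axis pebbles, but $p_{t_1}$, $p_2$, $p_3$ (and $p_T$) are off-axis, so you should note that each diagonal segment meets the line $y=\pm 1$ only at its endpoint; and the paper's case analysis also covers the round after the last bit, in which the same movement ends at a pebble rather than $P$ and thereby signals termination --- this lies outside your fixed $j \le |\mu|$ but is needed downstream for the $O(k^2)$ total.
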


\begin{proof}
To obtain the $j$-th bit of $\mu$ the movement of the agent is as follows.

When the treasure is present outside the square $B$, the oracle places a pebble $p_0$ at $P$ if the treasure is located in the right half of $y$-axis otherwise, there is no pebble placed at $P$ as discussed in case 1 of section \ref{pebbleplacement} (refer to line 6 of algorithm \ref{alg:cap0}). The movement of the agent from $P$ is as follows:
\begin{itemize}
    \item \textbf{$p_0$ found at $P$}: In this case the agent moves at an angle $\pi$, i.e., along negative $x$-axis (refer to the lines 1, 2 and 6 of algorithm \ref{alg:cap1}). Further, it ignores the first $j$ many pebbles along the negative $x$-axis (refer to line 10 of algorithm \ref{alg:cap1}) and moves until it either finds the treasure or encounters the $(j+1)$-th pebble $p_{b_j}$ or $p_{t_1}$ placed at either $(-2j-1,0)$ or $(-2j-2,0)$ or $(-2j-6,0)$ or $(-2j-7,0)$. If the treasure is not found, the cost of reaching this pebble is at most $2j+7$. Now, from the current position, the agent is instructed to move at an angle $\pi-\theta_j$ (refer to line 1 of algorithm \textit{FindBit} \ref{alg:cap2}), where $\theta_j=\arctan \left(\frac{-1}{2j}\right)$ until the treasure or a pebble $p_2$ or $p_3$ or $p_{t_2}$ is encountered. If a pebble is found, then this pebble is either $p_2$ placed at (-1,-1) or $p_3$ placed at (-2,-1) or $p_{t_2}$ placed at either (-4,-1) or (-5,-1), respectively. So, the cost of this traversal from the $(j+1)$-th pebble to either $p_2$ or $p_3$ or $p_{t_2}$ is at most $\sqrt{{(2j+2)}^2+1}$. From either of these pebbles, the agent is further instructed to move along an angle $\pi-\beta_{j}$, where $\beta_j=\arctan \left(\frac{2j+1}{2j-1}\right)$ (refer to line 5 of algorithm \ref{alg:cap2}) until it encounters the treasure or encounters a pebble or reaches $P$ with $O(1)$ cost. Now we have the following cases:
    \begin{itemize}
        \item \textit{If treasure found}: In this case, the agent has reached its goal, and the whole process terminates.
        \item \textit{If pebble found}: In this case, the pebble found is either $p_1$ or $p_{b_1}$. In either of the case, the agent is further instructed to move along an angle $\pi+\frac{\pi}{4}$ or $\pi-\frac{\pi}{4}$ (refer to the lines 12 and 14 of algorithm \ref{alg:cap2}) until it reaches $P$ or a pebble is found. Hence we have two cases:
        \begin{itemize}
            \item \textit{If $P$ reached}: The agent gains the information that the $j$-th bit of $\mu$ is 0 (refer to the lines 15 and 16 of algorithm \ref{alg:cap2} and lines 14, 15 and 16 of algorithm \ref{alg:cap1}). So, the path traveled to gain this information is $P\rightarrow p_{b_j} \rightarrow p_3 \rightarrow p_1 \rightarrow P$. So, the cost of this traversal is at most $(2j+2) + \sqrt{{(2j)}^2+1} + O(1)$, which is $O(j)$.
            \item \textit{If pebble found}: In this case, the agent continues to move until $P$ is reached and in which case the agent gains the information that termination is achieved, i.e., $(j-1)$-th bit is the terminating bit of $\mu$. The agent further moves on to execute algorithm \textit{FindTreasure} (refer to line 18 of algorithm \ref{alg:cap2}, and to lines 15 and 20 of algorithm \ref{alg:cap1}). So, the path travelled to gain this information is $P\rightarrow p_{t_1} \rightarrow p_{t_2} \rightarrow p_{b_1} \rightarrow p_1\rightarrow P$. So, the cost of this traversal is at most $(2j+7) + \sqrt{{(2j+2)}^2+1} + O(1)$,  which is $O(j)$.
        \end{itemize}
        \item \textit{If $P$ is reached}: In this case the agent gains the information that the $j$-th bit of $\mu$ is 1 (refer to the lines 8 and 9 of algorithm \ref{alg:cap2} and lines 14, 15 and 16 of algorithm \ref{alg:cap1}). So, the path traveled to gain this information is $P\rightarrow p_{b_j} \rightarrow p_2 \rightarrow P$. So, the cost of this traversal is at most $(2j+1) + \sqrt{{(2j)}^2+1} + O(1)$,  which is again $O(j)$.
    \end{itemize}
  
    \item \textbf{No pebble found at $P$}: In this case, the agent moves in a similar manner as the pebbles are placed in a similar way as for each pebble placed at $(m,n)$, where $m\neq 0$ and $n\neq 0$ for the above case, the oracle places the corresponding pebble at $(-m,-n)$. Similarly, the cost to obtain the $j$-th bit of the binary string is $O(j)$.
\end{itemize}

  Hence in each case, the cost of finding the $j$-th bit of $\mu$ is $O(j)$.

\qed
\end{proof}

\begin{lemma}\label{NotBox-binary}
Given $k$ pebbles and the treasure located outside $B$, the agent successfully finds the binary string $\mu$ at cost $O(k^2)$. 
\end{lemma}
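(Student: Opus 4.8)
The plan is to derive the bound directly from the per-bit cost already established in Lemma \ref{NotBoxbit-cost}, the only additional ingredient being a bound on the length of the string $\mu$. First I would argue that $|\mu| = O(k)$. Since the Oracle divides the plane into $2^{k-8}$ sectors, the sector index $j$ lies in $\{0,\dots,2^{k-8}-1\}$, and its binary encoding---together with the single leading bit that records whether $0 \le x_T \le 1$---has length at most $(k-8) + O(1) = O(k)$. Equivalently, after reserving the constant number of structural pebbles ($p_0,p_1,p_2,p_3,p_{t_1},p_{t_2},p_T$), at most one pebble is spent per remaining bit, so $|\mu| = O(k)$. This is the only structural fact about the encoding that the argument needs.

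Next I would recall the structure of the reconstruction: the agent learns $\mu$ one bit per iteration of Algorithm \ref{alg:cap1}, where each iteration is a closed walk that starts and ends at $P$. By Lemma \ref{NotBoxbit-cost}, the iteration that recovers the $j$-th bit costs $O(j)$. Summing over the iterations $j = 1,\dots,|\mu|$ therefore gives a total cost of $\sum_{j=1}^{|\mu|} O(j) = O(|\mu|^2)$, and substituting $|\mu| = O(k)$ yields the claimed $O(k^2)$.

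Finally I would check that the extra iteration in which the agent \emph{detects} termination---the one whose path is $P \to p_{t_1} \to p_{t_2} \to p_{b_1} \to p_1 \to P$---does not inflate the bound. By the termination case analyzed in Lemma \ref{NotBoxbit-cost}, this iteration costs only $O(|\mu|) = O(k)$, so it is absorbed into the same geometric sum. Hence the cumulative cost of learning the entire string $\mu$ is $O(k^2)$.

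I expect the summation itself to be completely routine, so the substance of the lemma lies almost entirely in Lemma \ref{NotBoxbit-cost}. The one place that warrants care is the bookkeeping of $|\mu|$: I must confirm that the agent performs exactly $|\mu|+1$ iterations (the last being the termination iteration), and that no bit requires more than the single $O(j)$ iteration accounted for in Lemma \ref{NotBoxbit-cost}, so that the quadratic bound is not secretly multiplied by an extra factor.
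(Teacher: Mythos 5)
Your proposal is correct and follows essentially the same route as the paper: the paper's proof likewise invokes Lemma~\ref{NotBoxbit-cost} for the $O(j)$ per-bit cost, notes that $\mu$ has length $O(k)$, and concludes via $\sum_{j=1}^{O(k)} O(j) = O(k^2)$. Your additional bookkeeping---justifying $|\mu| = O(k)$ from the sector count and checking that the termination iteration costs only $O(k)$ and is absorbed in the sum---is a welcome tightening of details the paper leaves implicit, but it does not change the argument.
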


\begin{proof}
According to lemma \ref{NotBoxbit-cost}, the agent successfully determines the $j$-th bit value of $\mu$ at $O(j)$ cost. Now as the binary string, $\mu$ is of length $k$ this implies, the total cost to obtain $\Gamma$ is $ \sum_{j=1}^{k} O(j) $, i.e., $O(k^2)$.
\qed
\end{proof}
\begin{figure}
\centering
\begin{minipage}{.45\textwidth}
  \centering
  \includegraphics[width=1.1\linewidth]{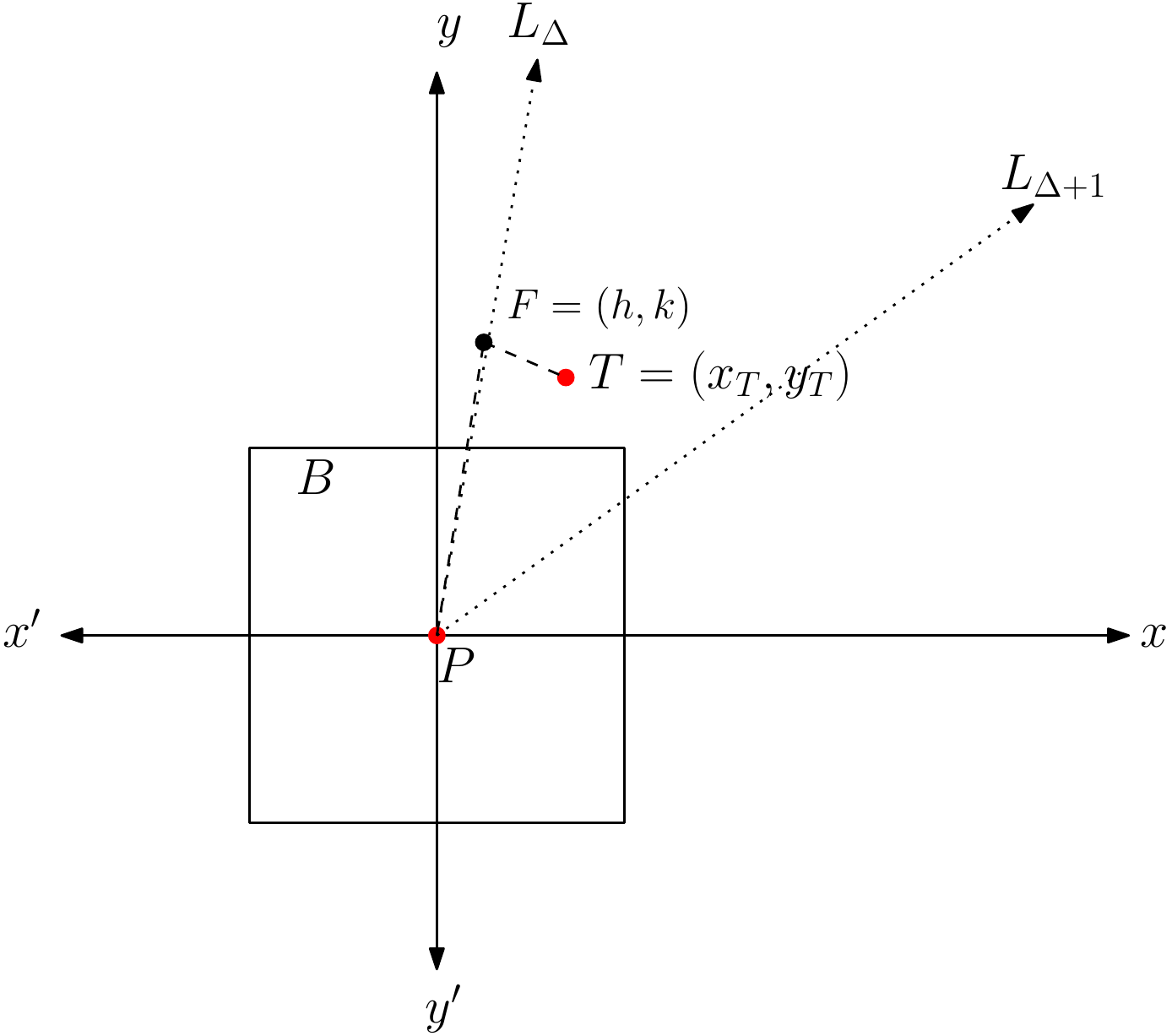}
  \captionof{figure}{To determine the location of pebble $p_T$}
  \label{slope}
\end{minipage}%
\hfill
\begin{minipage}{.5\textwidth}
  \centering
  \includegraphics[width=.8\linewidth]{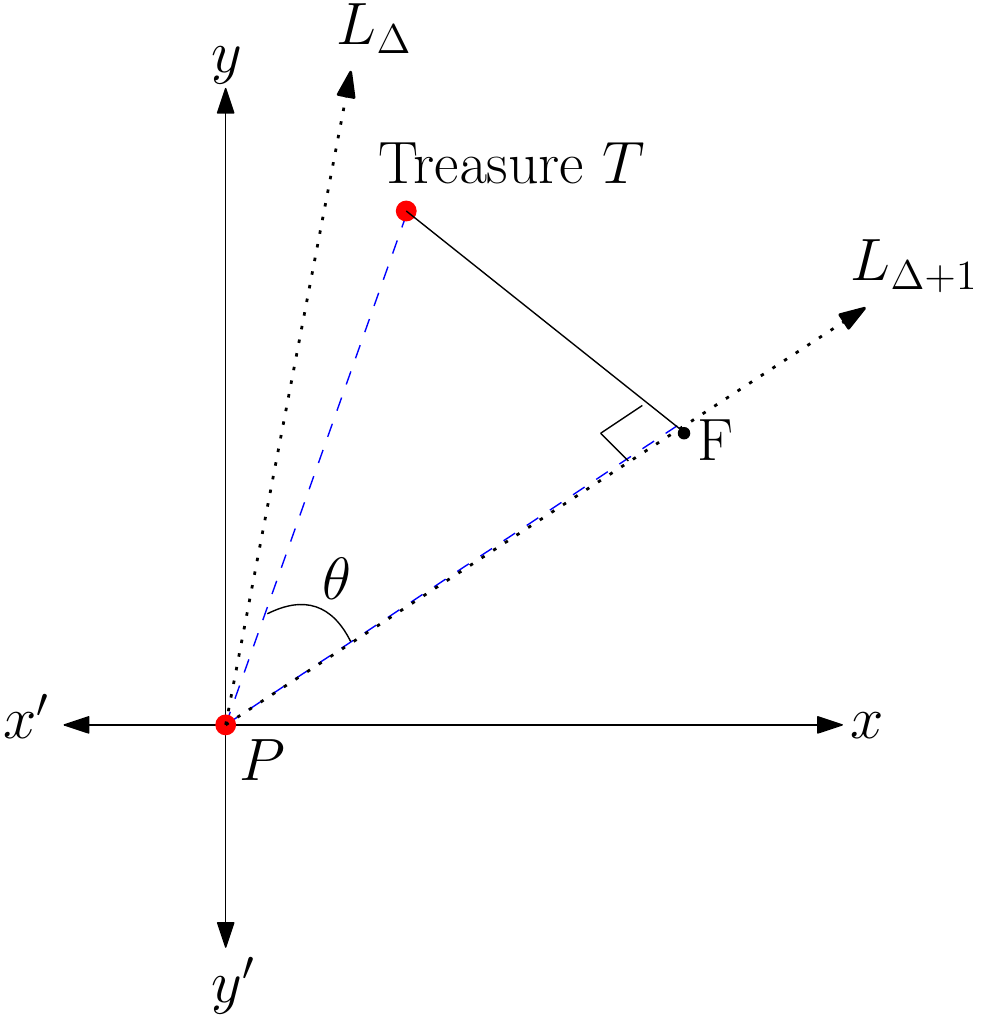}
  \captionof{figure}{Traversal of the agent inside the sector when $\mu_1=1$ of $\Gamma$}
  \label{final}
\end{minipage}
\end{figure}

\begin{lemma}\label{NotBox-Sectorcorrect}
When the treasure is located outside $B$, the agent after gaining the binary string $\mu$, successfully finds the treasure by executing the algorithm \textit{FindTreasure}.
\end{lemma}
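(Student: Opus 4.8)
The plan is to show that, conditioned on the agent having correctly recovered $\mu$ (guaranteed by Lemma~\ref{NotBox-binary}), the two moves performed by \textsc{SectorTravel} deterministically carry the agent from $P$ onto the pebble $p_T$ and then from $p_T$ onto $T$. First I would extract from $\mu$ the two pieces of information the algorithm actually uses: the leading flag bit $\mu_1$ and the sector index $\Delta=j$ encoded by the remaining bits (so that, e.g., for the string $\mu=1101$ of Example~1 one reads $\mu_1=1$ and $\Delta=5$). By the placement rule in step~4 of Case~1, $\mu_1$ records exactly which bounding half-line carries $p_T$: the value $\mu_1=0$ (the thin-strip situation $0\le x_T\le 1$, $y_T>1$) places $p_T$ on $L_\Delta$, while $\mu_1=1$ places it on $L_{\Delta+1}$. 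Likewise, the presence or absence of $p_0$ at $P$, read by the agent into the variable $angle$, records whether the encoding, and hence the whole sector fan, sits in the left or the right half-plane. This yields the four combinations (half-plane)$\times(\mu_1)$, and I would first check that \textsc{FindTreasure} routes each of them to \textsc{SectorTravel} with the matching parameters $(val,count,num)$.

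Next I would verify that the first move of \textsc{SectorTravel} is along precisely the half-line that carries $p_T$. The move is made at angle $\tfrac{\pi}{2}+(-1)^{count}\tfrac{\pi\cdot val}{2^{k-8}}$; since $L_0$ points North and consecutive half-lines differ by $\pi/2^{k-8}$, with the sign $(-1)^{count}$ fixing the side of rotation from the known half-plane, this angle is exactly the direction of $L_{val}$, i.e.\ of $L_\Delta$ when $\mu_1=0$ and of $L_{\Delta+1}$ when $\mu_1=1$. I would then argue that the first pebble the agent meets along this ray is $p_T=F$: the encoding pebbles all lie on the $x$-axis, which the ray meets only at $P$, and $T$ itself is off the ray, so nothing on the open segment $PF$ can stop the agent earlier. (If $T$ happens to lie on the bounding line then $F=T$ and the agent finishes on the first move; the degenerate possibility $F=P$ is excluded by the placement rules together with Lemma~\ref{noconflict}.)

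Finally I would show the second move lands on $T$. Because $F$ is the foot of the perpendicular dropped from $T$ onto the traversed line, the segment $FT$ is orthogonal to that line, so from $p_T$ the agent only needs to turn a right angle and walk straight to $T$; the parameter $num$ is set so that this right-angle turn points into the interior of $S_\Delta$, i.e.\ toward $T$ rather than away from it. Substituting the $(count,num)$ values produced by \textsc{FindTreasure} in each of the four cases and comparing the resulting heading with the direction of $FT$ confirms the agent reaches $T$ and terminates.

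The hard part will be the angle bookkeeping in the second and third steps: one must confirm, simultaneously in all four (half-plane)$\times(\mu_1)$ cases, that the first-move heading coincides with the half-line actually carrying $p_T$, and that the second-move right-angle turn selected by $num$ is the one pointing from $F$ toward the sector interior. This is where the correspondence between the sign $(-1)^{count}$ and the half-plane, the identification of $\Delta$ with the sector index, and the perpendicularity built into the foot-of-perpendicular placement of $p_T$ all have to line up; once the four cases are tabulated the geometric conclusion in each is immediate.
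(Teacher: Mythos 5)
Your proposal is correct and follows essentially the same route as the paper's own proof: a four-way case analysis over (half-plane, as read from the presence of $p_0$ at $P$) $\times$ ($\mu_1$), checking that \textsc{FindTreasure} passes the matching $(val,count,num)$ to \textsc{SectorTravel}, that the first move heads along the bounding half-line $L_\Delta$ or $L_{\Delta+1}$ carrying $p_T$, and that the perpendicular second move reaches $T$. Your added observations (that the encoding pebbles on the $x$-axis cannot intercept the ray, and that $\Delta$ is decoded from the bits after the flag bit $\mu_1$, consistent with Example~1) only make explicit details the paper leaves implicit.
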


\begin{proof}
After termination of algorithm \textit{AgentMovement} \ref{alg:cap1}, the agent performs the algorithm \textit{FindTreasure} \ref{alg:cap4} with the already acquired binary string $\mu$ to finally reach the treasure $T$ if not already reached. 
 
 The treasure is either located somewhere on the region $x\ge 0$ (i.e., right half of $y$-axis) or $x\le 0$ (i.e., left half of $y$-axis) and accordingly, the oracle divides the whole left half or right half of $y$-axis into $2^{k-8}$ sectors (refer to line 1 of algorithm \ref{alg:cap0}), where a sector $S_i$ is bounded by half-lines $L_i$ and $L_{i+1}$ and angle between consecutive half lines is $\frac{\pi}{2^{k-8}}$. Suppose the treasure is located somewhere in sector $S_\Delta$, so $\mu$ is the binary representation of $\Delta$. The agent decodes this value $\Delta$ after executing the algorithm {\it AgentMovement} \ref{alg:cap1}. The whole aim of the oracle is to align the agent either along the half-line $L_\Delta$ or $L_{\Delta+1}$. The alignment of the agent along the half-lines $L_\Delta$ or $L_{\Delta+1}$ depends on the first bit value of $\mu$, i.e., on $\mu_1$ (refer to line 3 in algorithm \ref{alg:cap4}) in the following manner:
 \begin{itemize}
     \item \textit{Case $\mu_1=0$}: If a pebble is found at $P$ (i.e., $angle=\pi$ refer to line 2 of algorithm \ref{alg:cap1}) then the agent is instructed to move along an angle $\frac{\pi}{2}-\frac{\pi\Delta}{2^{k-8}}$, i.e., along the half-line $L_{\Delta}$ until the treasure or a pebble is found (refer to the lines 4 and 5 of algorithm \ref{alg:cap4} and line 1 of algorithm \ref{alg:cap3}). Otherwise, if no pebble is found at $P$ (i.e., $angle = 0$ refer to line 4 of algorithm \ref{alg:cap1}) then the agent is instructed to move along an angle $\frac{\pi}{2}+\frac{\pi\Delta}{2^{k-8}}$ (refer to lines 11 and 12 of algorithm \ref{alg:cap4} and to line 1 of algorithm \ref{alg:cap3}) until it finds the treasure or a pebble.
     \begin{itemize}
         \item \textit{If treasure found}: Then the algorithm terminates as we have reached our goal (refer to the lines 2 and 3 of algorithm \ref{alg:cap3}).
         \item \textit{If pebble found}: The agent is further instructed to move along an angle $\pi+\frac{\pi}{2}$ or $\pi-\frac{\pi}{2}$ depending on the angle $\pi$ or $0$ (refer to line 5 of algorithm \ref{alg:cap3}) until treasure is found.
     \end{itemize}
     \item \textit{Case $\mu_1=1$}: If a pebble is found at $P$ (i.e., $angle=\pi$) then the agent is instructed to move along an angle $\frac{\pi}{2}-\frac{\pi(\Delta+1)}{2^{k-8}}$, i.e., along the half-line $L_{\Delta+1}$ until the treasure or a pebble is found (refer to the lines 7 and 8 of algorithm \ref{alg:cap4} and line 1 of Algorithm \ref{alg:cap3}). Otherwise, if no pebble is found at $P$ (i.e., $angle =0$), then the agent is instructed to move along an angle $\frac{\pi}{2}+\frac{\pi(\Delta+1)}{2^{k-8}}$ (refer to the lines 14 and 15 of algorithm \ref{alg:cap4} and line 1 of algorithm \ref{alg:cap3}) until it finds the treasure or a pebble.
     \begin{itemize}
         \item {\it If treasure found}: Then the algorithm terminates as we have reached our goal (refer to the lines 2 and 3 of algorithm \ref{alg:cap3}).
         \item {\it If pebble found}: The agent is further instructed to move along an angle $\pi-\frac{\pi}{2}$ or $\pi+\frac{\pi}{2}$ depending on the angle $\pi$ or $0$ (refer to line 5 of algorithm \ref{alg:cap3}) until the treasure is found.
     \end{itemize}
 \end{itemize}
 Hence in each case, the agent successfully finds the treasure after executing the algorithm \textit{FindTreasure}.

\qed 
\end{proof}
\begin{figure}[h]
\centering
  \includegraphics[width=0.7\linewidth]{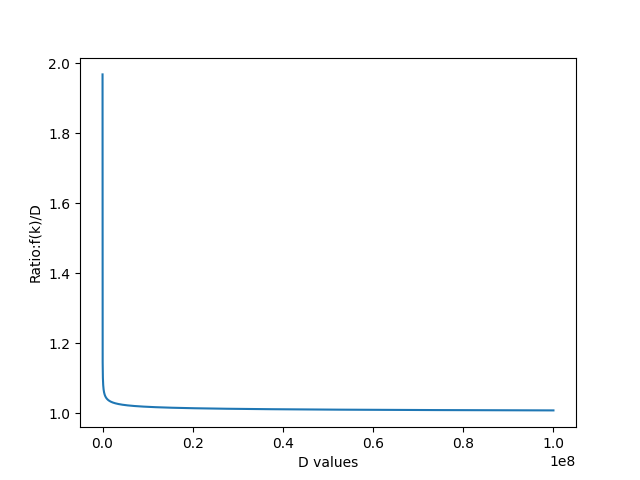}
  \caption{The curve represents the ratio of $\frac{f(k)}{D}$ for different values of $D$}
  \label{ratio}
\end{figure}   

\begin{lemma}\label{NotBox-sector}
When the treasure is located outside $B$, the agent after gaining the binary string $\mu$ finds the treasure at cost $D(\sin\theta' + \cos\theta')$, where $\theta' = \frac{\pi}{2^{k'}}$ and $k'=k-8$.
\end{lemma}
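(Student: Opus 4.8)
The plan is to isolate the single final excursion the agent makes after it has decoded $\Delta$ and invoked the \textit{SectorTravel} routine, and to show that its length is controlled by the angular width $\theta'=\pi/2^{k-8}$ of one sector. First I would recall from \textit{FindTreasure}, \textit{SectorTravel} and Lemma~\ref{NotBox-Sectorcorrect} that this phase consists of exactly two straight segments: starting from $P$ the agent travels along one of the two bounding half-lines of $S_\Delta$ — along $L_\Delta$ when $\mu_1=0$ and along $L_{\Delta+1}$ when $\mu_1=1$ — until it meets the pebble $p_T$, and then turns by $\pi/2$ and walks from $p_T$ to $T$. By the placement rule, $p_T$ is located at the foot $F$ of the perpendicular dropped from $T$ onto that half-line (cf. Fig.~\ref{slope} and Fig.~\ref{final}), so triangle $PFT$ has its right angle at $F$, with hypotenuse $PT$ of length $D$.

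Next I would set $\phi=\angle FPT$, the angle that $PT$ makes with the half-line actually traversed. The right triangle $PFT$ then gives the two segment lengths directly, and the total length of the final phase is exactly
\begin{equation}
|PF| + |FT| = D\cos\phi + D\sin\phi = D(\cos\phi + \sin\phi).
\end{equation}
It therefore remains only to bound $\phi$. The key geometric observation is that $T\in S_\Delta$, the sector bounded by $L_\Delta$ and $L_{\Delta+1}$ whose angular opening is precisely $\theta'=\pi/2^{k-8}$; whichever of the two bounding half-lines is traversed, the direction $PT$ lies inside the sector, so $0\le\phi\le\theta'$. Since $\cos\phi+\sin\phi$ is increasing on $[0,\pi/4]$ and $\theta'=\pi/2^{k-8}\le\pi/4$ in the relevant range of $k$, the expression $D(\cos\phi+\sin\phi)$ is maximized over the sector at $\phi=\theta'$, giving the claimed bound $D(\cos\theta'+\sin\theta')$; this worst case is approached as $T$ tends to the far bounding half-line.

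The step I expect to require the most care is the justification that $\phi\le\theta'$ \emph{together with} the correct identification of which bounding line is traversed. One must verify that the choice dictated by $\mu_1$ — made originally to keep $p_T$ outside $B$, cf. Lemma~\ref{noconflict} — places $PT$ on the correct side of the traversed half-line, so that $F$ genuinely lies between $P$ and the far end and the decomposition $|PF|=D\cos\phi$, $|FT|=D\sin\phi$ holds with a nonnegative acute $\phi\le\theta'$ rather than an obtuse configuration. A secondary point worth flagging is the monotonicity of $\cos\phi+\sin\phi$, which is valid only while $\theta'\le\pi/4$ (that is, $k\ge 10$); the boundary case $k=9$ should be checked separately. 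Once the acute-angle configuration is confirmed, the stated bound follows purely from this monotonicity on $[0,\theta']$, and combined with the $O(k^2)$ decoding cost of Lemma~\ref{NotBox-binary} it yields the overall complexity $O(k^2)+D(\sin\theta'+\cos\theta')$.
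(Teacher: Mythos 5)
Your proposal follows essentially the same route as the paper's own proof: decompose the final phase into the right triangle $PFT$ with the right angle at the pebble $p_T=F$ (foot of the perpendicular on the traversed half-line $L_\Delta$ or $L_{\Delta+1}$), write the cost as $|PF|+|FT|=D(\cos\phi+\sin\phi)$, and bound the angle $\phi=\angle FPT$ by the sector width $\theta'$. You are in fact more careful than the paper, which silently substitutes $\theta'$ for $\phi$ by writing $PF=D\cos\theta'$ and $FT=D\sin\theta'$: your monotonicity caveat is genuine, since $\cos\phi+\sin\phi$ increases only on $[0,\pi/4]$, and for $k=9$ (where $\theta'=\pi/2$) the stated bound $D(\sin\theta'+\cos\theta')=D$ can actually be exceeded (the cost approaches $\sqrt{2}\,D$ when $\phi$ is near $\pi/4$), so the lemma as written is only justified for $k\ge 10$.
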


\begin{proof}
The agent, after gaining the binary string $\mu$, executes the algorithms {\it AgentMovement} and {\it FindTreasure}, and successfully reaches the treasure by following the path $PF\rightarrow FT$ from $P$ (refer Fig. \ref{final}). Since the angle between $L_\Delta$ and $L_{\Delta+1}$ is $\frac{\pi}{2^{k-8}}$. Hence, $\angle FPT$ is at most $\frac{\pi}{2^{k-8}}$ (refer Fig. \ref{final}( which is $\theta'$ (say), also $\angle TFP=\frac{\pi}{2}$ (as $F$ is the foot of perpendicular of $T$ to $L_{\Delta+1}$ if $\mu_1=1$ otherwise if $\mu_1=0$ then $F$ is the foot of perpendicular of $T$ to $L_{\Delta}$) and $|PT|\leq D$. So we have $PF=D\cos\theta'$ and $FT=D\sin\theta'$. Hence, the cost of traveling along the sector $S_{\Delta}$ from $P$ to reach $T$ is $PF+FT$, i.e.,  $D(\sin\theta' + \cos\theta')$.
 \qed 
\end{proof}

Combining Lemma \ref{NotBox-Sectorcorrect} to Lemma \ref{NotBox-sector}, we have the final result of this section summarized by the following theorem. 

\begin{theorem}
Given $k$ pebbles, the agent starting from $P$ successfully finds treasure with $O(k^2)+D(\sin\theta' + \cos\theta')$-cost, where $\theta' = \frac{\pi}{2^{k'}}$ and $k'=k-8$.
\end{theorem}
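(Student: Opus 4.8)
The plan is to prove the theorem by assembling the correctness and cost lemmas already established, after a single case split on whether the treasure lies inside the central box $B$ or outside it. This dichotomy matches exactly the two branches of \textit{PebblePlacement} (Algorithm \ref{alg:cap0}) and \textit{AgentMovement} (Algorithm \ref{alg:cap1}), so each case can be discharged by citing the relevant lemmas rather than reworking the geometry.

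First I would treat the case $T \in B$. Here Lemma \ref{Box-correct} guarantees that the agent, using only the three pebbles $p_1,p_2,p_3$ placed by \textit{SquarePlacement}, correctly reaches the treasure, and Lemma \ref{Box-complex} bounds the total travel by $O(D)$. Since any point of $B$ is within distance $\sqrt{2}$ of $P$, we have $D \le \sqrt{2}$, so this $O(D)$ cost is in fact $O(1)$ and is absorbed into the $O(k^2)$ term of the claimed bound. Thus the theorem holds trivially in this case.

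Next, for $T \notin B$, I would decompose the agent's execution into two successive phases and charge them separately. In the \textbf{decoding phase} the agent repeatedly leaves $P$, walks along the $x$-axis, and performs the \textit{FindBit} excursions to read the bits of $\mu$ one at a time, halting when the $p_{t_1},p_{t_2}$ gadget is detected. By Lemma \ref{NotBox-binary}, obtaining the whole string $\mu$ of length $O(k)$ costs $\sum_{j} O(j) = O(k^2)$; correctness of this phase (that the string read equals the binary encoding of the sector index $\Delta$, and that termination is detected exactly once) follows from Lemma \ref{NotBoxbit-cost} together with the termination argument in the high-level description. In the \textbf{sector-travel phase} the agent decodes $\Delta$, selects $L_\Delta$ or $L_{\Delta+1}$ according to the leading bit $\mu_1$, and runs \textit{FindTreasure}/\textit{SectorTravel}; Lemma \ref{NotBox-Sectorcorrect} certifies that this reaches $T$, while Lemma \ref{NotBox-sector} bounds the cost of the final two-leg path $PF \to FT$ by $D(\sin\theta'+\cos\theta')$ with $\theta'=\pi/2^{k-8}$. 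Adding the two phases yields the total $O(k^2) + D(\sin\theta'+\cos\theta')$.

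The main point to be careful about is not any single estimate but the \emph{additivity} of the two phases: I must confirm that the decoding excursions and the final sector walk consume disjoint portions of the travel budget, and that no pebble placed for the encoding on one half-axis obstructs the straight-line descent $PF$ into the sector. The latter is precisely what Lemma \ref{noconflict} and the convention of encoding on the half-axis opposite to the treasure were set up to guarantee, so the combination is clean; the remaining work is merely to observe that the constant from the $T\in B$ branch and the $O(j)$ per-bit overheads sum into the stated $O(k^2)$, completing the proof.
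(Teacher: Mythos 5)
Your proposal is correct and takes essentially the same route as the paper: the paper proves this theorem by simply combining the already-established lemmas, charging $O(k^2)$ for reading the string $\mu$ (Lemma \ref{NotBox-binary}) and $D(\sin\theta'+\cos\theta')$ for the final two-leg walk $PF \to FT$ (Lemmas \ref{NotBox-Sectorcorrect} and \ref{NotBox-sector}), exactly as in your two-phase decomposition. Your explicit handling of the $T \in B$ case (where $D \le \sqrt{2}$ makes the $O(D)$ cost of Lemma \ref{Box-complex} a constant absorbed in $O(k^2)$) and your remark on phase additivity are finer-grained than the paper's one-sentence combination, but they constitute the same lemma-assembly argument rather than a different proof.
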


\begin{remark}
Consider the function $f(k)=O(k^2)+D(\sin\theta'+\cos\theta')$, where $\theta'=\frac{\pi}{2^{k-8}}$. Note that  for $D,k \rightarrow \infty$ and $k \in o(\sqrt{D})$, the value of $\frac{f(k)}{D} \rightarrow 1$. In order to demonstrate this fact, we plot the value of $\frac{f(k)}{D}$ for increasing values of $D$ in the range $[1000,100000000]$ and for $k=\lfloor D^{\frac{1}{3}}\rfloor$. Fig. \ref{ratio} shows the values of $\frac{f(k)}{D}$ for different values of $D$ in the above mentioned range and for the fix value of $k$ for each $D$. This figure shows that for large value of $D$, the value of $\frac{f(k)}{D}$ is very close to 1.
\end{remark}

\section{Conclusion}\label{section-4}

We propose an algorithm for the treasure hunt that finds the treasure in an Euclidean plane using $k\ge 9$ pebbles at cost $O(k^2)+D(\sin\theta'+\cos\theta')$, where $\theta'=\frac{\pi}{2^{k-8}}$. Proving a matching lower bound remains an open problem to consider in the future.
It can be noted that if the agent has some visibility, the problem becomes very trivial even with only one pebble: place a pebble on the line from $P$ to $T$ within a distance of $r$ from $P$, where $r$ is the visibility radius of the agent. Starting from $P$, the agent sees the position of the pebble, move to the pebble, and then continue until it hits the treasure. But the problem becomes challenging if the compass of the agent is not perfect: i.e., if the agent does not have ability to measure an angle accurately. This seems a nice future problem as an extension of the current work.  
\bibliographystyle{splncs03}

\bibliography{bibliog}

\begin{thebibliography}{10}
\providecommand{\url}[1]{\texttt{#1}}
\providecommand{\urlprefix}{URL }

\bibitem{abiteboul2006compact}
Abiteboul, S., Alstrup, S., Kaplan, H., Milo, T., Rauhe, T.: Compact labeling
  scheme for ancestor queries. SIAM Journal on Computing  35(6),  1295--1309
  (2006)

\bibitem{alpern2006theory}
Alpern, S., Gal, S.: The theory of search games and rendezvous, vol.~55.
  Springer Science \& Business Media (2006)

\bibitem{beck1970yet}
Beck, A., Newman, D.J.: Yet more on the linear search problem. Israel journal
  of mathematics  8(4),  419--429 (1970)

\bibitem{bhattacharya2022treasure}
Bhattacharya, A., Gorain, B., Mandal, P.S.: Treasure hunt in graph using
  pebbles. In: Stabilization, Safety, and Security of Distributed Systems: 24th
  International Symposium, SSS 2022, Clermont-Ferrand, France, November 15--17,
  2022, Proceedings. pp. 99--113. Springer (2022)

\bibitem{bouchard2020deterministic}
Bouchard, S., Dieudonn{\'e}, Y., Pelc, A., Petit, F.: Deterministic treasure
  hunt in the plane with angular hints. Algorithmica  82,  3250--3281 (2020)

\bibitem{bouchard2022impact}
Bouchard, S., Labourel, A., Pelc, A.: Impact of knowledge on the cost of
  treasure hunt in trees. Networks  80(1),  51--62 (2022)

\bibitem{demaine2006online}
Demaine, E.D., Fekete, S.P., Gal, S.: Online searching with turn cost.
  Theoretical Computer Science  361(2-3),  342--355 (2006)

\bibitem{dereniowski2012drawing}
Dereniowski, D., Pelc, A.: Drawing maps with advice. Journal of Parallel and
  Distributed Computing  72(2),  132--143 (2012)

\bibitem{dobrev2012online}
Dobrev, S., Kr{\'a}lovi{\v{c}}, R., Markou, E.: Online graph exploration with
  advice. In: Structural Information and Communication Complexity: 19th
  International Colloquium, SIROCCO 2012, Reykjavik, Iceland, June 30-July 2,
  2012, Revised Selected Papers 19. pp. 267--278. Springer (2012)

\bibitem{emek2011online}
Emek, Y., Fraigniaud, P., Korman, A., Ros{\'e}n, A.: Online computation with
  advice. Theoretical Computer Science  412(24),  2642--2656 (2011)

\bibitem{fraigniaud2009distributed}
Fraigniaud, P., Gavoille, C., Ilcinkas, D., Pelc, A.: Distributed computing
  with advice: information sensitivity of graph coloring. Distributed Computing
   21,  395--403 (2009)

\bibitem{fraigniaud2008tree}
Fraigniaud, P., Ilcinkas, D., Pelc, A.: Tree exploration with advice.
  Information and Computation  206(11),  1276--1287 (2008)

\bibitem{fraigniaud2010communication}
Fraigniaud, P., Ilcinkas, D., Pelc, A.: Communication algorithms with advice.
  Journal of Computer and System Sciences  76(3-4),  222--232 (2010)

\bibitem{fraigniaud2007local}
Fraigniaud, P., Korman, A., Lebhar, E.: Local mst computation with short
  advice. In: Proceedings of the nineteenth annual ACM symposium on Parallel
  algorithms and architectures. pp. 154--160 (2007)

\bibitem{fricke2016distributed}
Fricke, G.M., Hecker, J.P., Griego, A.D., Tran, L.T., Moses, M.E.: A
  distributed deterministic spiral search algorithm for swarms. In: 2016
  IEEE/RSJ International Conference on Intelligent Robots and Systems (IROS).
  pp. 4430--4436. IEEE (2016)

\bibitem{fusco2008trade}
Fusco, E.G., Pelc, A.: Trade-offs between the size of advice and broadcasting
  time in trees. In: Proceedings of the twentieth annual symposium on
  Parallelism in algorithms and architectures. pp. 77--84 (2008)

\bibitem{gal2010search}
Gal, S.: Search games. Wiley encyclopedia of operations research and management
  science  (2010)

\bibitem{gorain2022pebble}
Gorain, B., Mondal, K., Nayak, H., Pandit, S.: Pebble guided optimal treasure
  hunt in anonymous graphs. Theoretical Computer Science  922,  61--80 (2022)

\bibitem{miller2015tradeoffs}
Miller, A., Pelc, A.: Tradeoffs between cost and information for rendezvous and
  treasure hunt. Journal of Parallel and Distributed Computing  83,  159--167
  (2015)

\bibitem{pelc2019cost}
Pelc, A., Yadav, R.N.: Cost vs. information tradeoffs for treasure hunt in the
  plane. arXiv preprint arXiv:1902.06090  (2019)

\bibitem{pelc2021advice}
Pelc, A., Yadav, R.N.: Advice complexity of treasure hunt in geometric
  terrains. Information and Computation  281,  104705 (2021)

\bibitem{ta2007deterministic}
Ta-Shma, A., Zwick, U.: Deterministic rendezvous, treasure hunts and strongly
  universal exploration sequences. In: Symposium on Discrete Algorithms:
  Proceedings of the eighteenth annual ACM-SIAM symposium on Discrete
  algorithms. vol.~7, pp. 599--608 (2007)

\end{thebibliography}
\end{document}